\newtheorem{proposition}{Proposition}
\newtheorem{claim}[theorem]{Claim}
\newcommand{\mL}{\mathcal{L}}
\newcommand{\mV}{\mathcal{V}}
\newcommand{\LCLS}{\mathsf{LCL\textsuperscript{*}}}
\newcommand{\LOCAL}{\mathsf{LOCAL}}
\newcommand{\LCL}{\mathsf{LCL}}
\newcommand{\LD}{\mathsf{LD}}
\newcommand{\EE}{\mathbb{E}}
\newcommand{\PP}{\mathbb{P}}
\title{How long it takes for an ordinary node with an ordinary ID to output?}
\author[1]{Laurent Feuilloley}
\affil[1]{Institut de Recherche en Informatique Fondamentale (IRIF),\\ CNRS and University Paris Diderot,\\ France}
\begin{document}

\maketitle

\begin{abstract}

In the context of distributed synchronous computing, processors perform in rounds, and the time complexity of a distributed algorithm is classically defined as the number of rounds before all computing nodes have output. 
Hence, this complexity measure captures the running time of the slowest node(s). In this paper, we are interested in the running time of the ordinary nodes, to be compared with the running time of the slowest nodes. 
The \emph{node-averaged} time complexity of a distributed algorithm on a given instance is defined as the average, taken over every node of the instance, of the number of rounds before that node outputs. 
We compare the node-averaged time complexity with the classic one in the standard  $\LOCAL$ model for distributed network computing. We show that there can be an exponential gap between the former and the later, as witnessed by, e.g., leader election.  Our first main result is a positive one, stating that, in fact, the two time complexities behave the same for a large class of problems on very sparse graphs. In particular, we show that, for $\LCL$ problems on cycles, the node-averaged time complexity is of the same order of magnitude as the ``slowest node'' time complexity. 
In addition, in the $\LOCAL$ model, the time complexity is computed as a worst case over all possible identity assignments to the nodes of the network. In this paper, we also investigate the \emph{ID-averaged} time complexity, when the number of rounds is averaged over all possible identity assignments of $O(\log n)$-size identifiers, where $n$ is the size of the network. Our second main result is that the ID-averaged time complexity is essentially the same as the expected time complexity of \emph{randomized} algorithms (where the expectation is taken over all possible random bits used by the nodes, and the number of rounds is measured for the worst-case identity assignment). 
Finally, we study the node-averaged ID-averaged  time complexity. We show that 3-colouring the $n$-node ring requires $\Theta(\log^*\!n)$ rounds if the number of rounds is averaged over the nodes, or if the number of rounds is averaged over the identity assignments. In contrast, we show that 3-colouring the ring requires only $O(1)$ rounds if the number of rounds is averaged over both the nodes and the identity assignments.
\end{abstract}

\keywords{Distributed network algorithm, average complexity, random identifiers, LCL, coloring.}

\thispagestyle{empty}
\newpage
\setcounter{page}{1}

\section{Introduction}

The $\LOCAL$ model \cite{Peleg00} is a standard model of distributed network computing. In this model, the network is abstracted as a graph, and the nodes perform in rounds to solve some task. At each round, each node can send messages to its neighbours in the graph, receive messages and perform some computation. The (time) complexity of an algorithm solving some task is measured by the number of rounds before the task is completed, which usually depends on the size of the network, that is, its number of nodes.

A classic assumption in the $\LOCAL$ model is that the nodes know the size of the network \emph{a priori}. As a consequence, in many algorithms, each node can compute from the start how many rounds are needed to solve the task, and stops after that number of rounds. 
There have been efforts to remove such \emph{a priori} knowledge about the graph, that is to avoid that the algorithm uses parameters such as the size of the graph, but also the arboricity \cite{BarenboimE10} or the maximum degree \cite{Musto11}). 
Quite recently a general technique, called \emph{pruning algorithms}, has been developed to remove the assumption that the nodes know the size $n$ of the network \cite{KormanSV13}. In other words, \cite{KormanSV13} provides a method to transform a non-uniform algorithm into a uniform algorithm.
The basic idea is to guess the number of nodes and to apply a non-uniform algorithm with this guess. 
The output can be incorrect, as the algorithm is only certified to be correct when it is given the actual number of nodes in the graph. 
The technique consists in virtually removing from the graph the nodes that have correct outputs, and to repeat the previous procedure with a new guess that is twice as large as the previous guess.
Eventually all nodes have an output after a certain number of iterations, and the solution that is computed is correct. 
Note that with the resulting uniform algorithm some nodes can output very quickly, and some others can output much later. 
So far, only the classic measure of complexity, \emph{i.e.} the time before all nodes stop and output, has been studied, even for algorithms with a such discrepancies in the running times. 
In other words, only the behaviour of the \emph{slowest} node has been considered. 
In this paper, we introduce a new measure of complexity, which is an \emph{average} measure, in opposition to the usual measure which is a \emph{worst-case} measure. 
More precisely, we define the running time of a node as the number of rounds before it outputs, and consider the average of the running times of the nodes. 
We argue that, when studying the locality of problems and of algorithms, it is worth to also consider this measure. 
Indeed it describes the typical local behaviour of the algorithm, that is, the behaviour of an \emph{ordinary} node.

In some contexts partial solutions are useful. For example, consider the scenario in which two tasks are to be performed one after the other. 
In such a case, it may happen that, on some part of the graph a partial solution for the first task is computed quickly. 
We can take advantage of this to start the second task in that part of the network, while the other nodes are still working on the first task. (Note that knowing if the first task is finished can be impossible locally, and one has to design the second algorithms such that it can start at different rounds on different nodes.) 
Consider a second scenario in which a global operator has to take a decision based on the outcome of a local algorithm. In that case, a partial solution may also be sufficient. 
For example the operator can detect that the network is in a bad state, and start immediately a recovery procedure without waiting for all nodes to finish.
Such situations are a motivation for the study of graph property testing, where a centralized algorithm probes the network on a sublinear number of nodes and take a decision based on this partial knowledge. We refer to the survey on graph property testing \cite{Goldreich10c} for more examples of applications.
When such partial solutions are useful, one would like to design algorithms that stop as soon as possible, and the average of the running times of the nodes is then a measure one would like to minimize.

Another classic assumption in the $\LOCAL$ model is that the nodes are given distinct identifiers. 
These identifiers (or IDs for short), are distinct binary strings on $O(\log n)$ bits, that is, distinct integers from a polynomially large space. 
The usual way to measure the complexity of an algorithm is again to consider the worst-case behaviour, that is, the performance of the algorithm on the worst ID assignment. 
We argue that the average performances over all ID assignments is also worth considering. 
Indeed many lower bounds are based on the fact that, as the identifiers can be viewed as set by an adversary, they do not really help to break symmetry. 
For example, on a path, one may consider the identifier assignment $1,2,...,n$, and argue that if the nodes only consider the relative ordering of the identifiers in their neighbourhoods, then many nodes have the same view, and thus they cannot break symmetry. 
It is interesting to study if such specific constructions are required, or if one can design lower bounds that are robust against arbitrary ID assignment. 
We cannot expect that IDs are always set in a perfect way for the task we consider, but it may seem excessive to consider that they are set in an adversarial way, which naturally leads to the question of random assignments. 
We study the complexity of algorithms on random ID assignment, as the average over all possible ID assignments of the running time of the slowest node. Finally, the typical behaviour of an algorithm can arguably be the expected running time of an ordinary node on a random ID assignment. That is, the standard complexity but averaged on both nodes and ID assignments. 

For the sake of concreteness, here is an example of the type of questions tackled in this paper. 
Consider the classic task of 3-colouring a ring of $n$ nodes. 
It is known that this task requires $\Omega(\log^*\!n)$ rounds \cite{Linial92}. 
This bound also holds for randomized algorithms \cite{Naor91}. The question tackled in this paper are of the following form: is it the case that a node typically outputs after a constant number of rounds, or is the $\Omega(\log^*\!n)$ lower bound robust to this spatial averaging? And what about the complexity of the problem on a random ID assignment?  

\paragraph*{Our results.}
Our first result is that averaging on the nodes can have a dramatic effect on the time complexity of solving a task in the $\LOCAL$ model. 
Indeed, for leader election on cycles, there is an exponential gap between the node-averaged complexity and the classic complexity. 
That is, the slowest node outputs after a number of rounds that is exponentially larger than the time complexity of an ordinary node.
This contrasts with our next result, for very sparse graphs. We say that a graph has linearly bounded growth, if there exists a constant $q$ such that every ball of radius $r$ has at most $q.r$ nodes. For such graphs, we show that, for many classic tasks, the two measures are of the same order of magnitude. 
More precisely for a class of tasks that generalizes the class of \emph{locally checkable labellings} ($\LCL$ for short) \cite{NaorS95}, we show the following lemma, that we call \emph{local average lemma}. 
For a given algorithm, either no node has running time much larger than the average running time in its neighbourhood, or there exists an algorithm that is strictly better. As a consequence when proving lower bounds for these problems, one can use the fact that, loosely speaking, there is no \emph{peak} in the distribution of the running times of the nodes. 
Then, to show that the average running time is large, it is sufficient to show that there exists a set of nodes with large running times, that are well spread out in the network.
This local average lemma can be used to show, for example, that for $\LCL$ problems on cycles, the landscape of complexities known for the slowest node (either $\Theta(1)$, $\Theta(\log^*\!n)$ or $\Theta(n)$) is the same for an ordinary node. 

We then move on to averaging on the identifier assignments. 
That is, we consider the expected behaviour of deterministic algorithms on random ID assignments. This topic happens to be related with the expected complexity of randomized algorithms. 
We show that even though these two models have specific properties, namely the independence of the random strings for randomized algorithms, and the uniqueness of the identifiers for random ID assignments, the complexities are essentially the same. It follows that the results known for randomized algorithms can be translated to random assignments.
 
Finally we prove that averaging on both nodes and IDs, can have an important impact on the complexity. We take the example of 3-colouring an $n$-node cycle. 
From the previous results of the paper, and from the literature, we know that this task has complexity $\Omega(\log^*\!n)$ for both the average on the nodes and the average on the identifiers. 
Quite surprisingly, when averaging on both the nodes and the ID assignments, the complexity becomes constant. In other words, deterministic and randomized complexity of ordinary nodes are clearly separated. Such separation contrasts with the case of the classic measure where randomized constant-time algorithms for $\LCL$ can be derandomized to get constant-time deterministic algorithms \cite{NaorS95}.

\paragraph*{Related works.}

The $\LOCAL$ model was defined formally in \cite{Linial92}, and a standard book on the topic is \cite{Peleg00}. 
The problem of leader election, studied in section \ref{sec:exponential_gap}, is a classic problem in distributed computing \cite{AttiyaW04, Lynch96}.

Deterministic algorithms stopping after different number of rounds on different nodes have been studied in contexts where the parameters of the graph, such as the degree or the number of vertices, are unknown. Such algorithms are called \emph{uniform algorithm}, because it is the same algorithm that is run on every graph, independently of the parameters. A work that is particularly relevant to us is \cite{KormanSV13}. In this paper the authors prove that for a wide class of problems, one can remove the assumption that the nodes know the size $n$ of the network. 
This is done by applying a general method to transform a non-uniform algorithm into a uniform one, without increasing of the asymptotic running time. 
In this framework, called \emph{pruning algorithms}, some nodes may stop very early and some may run for much longer time. Such algorithms justify the study of the behaviour of an ordinary node and not only of the behaviour of the slowest node.  

The local average lemma of section \ref{sec:local_average} applies to problems that are local in the sense that the nodes can check in constant time if a given solution is correct. This is an extension of the well-studied notion of locally checkable labelling (or $\LCL$ for short) \cite{NaorS95}. The original $\LCL$ requires in addition that the size of the inputs and outputs are bounded. Also the set of correct labellings usually studied, \emph{e.g.} in distributed decision \cite{FeuilloleyF16} or in $\LCL$, do not depend on the identifiers of the graph, a restriction that is not needed in the current paper.   

Randomized algorithms, that turn out to be equivalent to algorithms working on random ID assignments, form a well-studied subject, going back to the 80s with algorithms for finding a maximal independent set \cite{AlonBI86, Luby86}. Recently, improvements on classic problems have been obtained \cite{Ghaffari16, HarrisSS16} along with an exponential separation between randomized and deterministic complexity \cite{ChangKP16} (see also \cite{BrandtFHKLRSU16}). In \cite{Ghaffari16}, the author, by advocating the study of the so-called \emph{local complexity} for a randomized algorithms, conveys the same message as the current paper: the behaviour of a typical node is worth considering, even if some nodes of the graph have much worst behaviour.

In this paper, two relaxation of the classic measure are considered, from worst-case to average, on the nodes and on the IDs. An aspect that we do not consider is the structure of the graph. We refer to \cite{GamarnikS14} and references therein, for the topic of local algorithms on random graphs. 

\section{Model and definitions}\label{sec:model_def} 

\paragraph*{Graphs and neighbourhoods.} The graphs considered in this paper are simple connected graphs, and throughout the text $n$ will denote the number of nodes in the graph. The distance between two nodes is the number of edges on a shortest path between these nodes, that is, the hop-distance. The $k$-neighbourhood of a node $v$ in a graph $G$, is the graph induced by the nodes at distance at most $k$ from $v$. Every node is given a distinct identifier on $O(\log n)$ bits, or equivalently an integer from a polynomially large range.

\paragraph*{Distributed algorithms.} The algorithms studied in this paper can be defined in two ways. In both definitions, the nodes are synchronized and work in rounds, and for both the computational power of the nodes is unbounded. 
In the first definition, at each round, every node can exchange messages with its neighbours, and perform some computation. There is no bound on the size of the messages.
A given node chooses an output after some number of rounds, and different nodes can stop at different rounds.  
After the output, a node can continue to transmit messages and perform computations, but it cannot change its output. In other words, the nodes do not go to a sleep mode once they have output, but the output is irrevocable. 
In the second definition, each node starts with the information of its $0$-neighbourhood, and increases the size of this view at each round. That is, after $k$ rounds, it knows its $k$-neighbourhood. This $k$-neighborhood includes the structure of the graph in this neighbourhood, along with the identifiers and the inputs of each node. 
At some round, it chooses an output and stops.
These two definitions are equivalent. On one hand, if we start from the first definition, we can assume that each round every node sends to its neighbours all the information it has about the graph (remember that the message size is unbounded)%
\footnote{There is a subtlety here, which is that after $k$ rounds in the message-passing algorithm a node cannot know the edges that are between nodes at distance exactly $k$ from it. For the sake of simplicity, we consider the proper $k$-neighbourhoods, as it does not affect the asymptotics of the algorithms.}. 
Then after $k$ rounds, a node has gathered the information about its $k$-neighbourhood. On the other hand, given a $k$-neighbourhood, a node can simulate the run of the other nodes, and compute the messages that it would receive if the nodes were using a message-passing algorithm. 

\paragraph*{Complexity measures studied.} The running time of a node is the number of rounds before it outputs. With the second definition, the running time of the algorithm can be described in a more combinatorial way: it is the minimum integer $k$ such that the node can choose an (irrevocable) output based only on the view of radius $k$. 
Let the set of legal ID assignments be denoted by $\mathcal{ID}$. Given a graph $G$, an identifier assignment $I:v\mapsto \mathcal{ID}$, some input $x$, an algorithm $A$, and a node $v$, we denote by $r_{G,I,x,A}(v)$ the running time of node $v$ in this context.
When the context is clear, we simply use $r(v)$. We now define the different measures of complexity used in this paper. Given a graph $G$, and an algorithm $A$, we call \emph{complexity of the slowest node complexity} or \emph{classic complexity}, and \emph{complexity of an ordinary node} or \emph{node-averaged complexity} respectively, the following quantities:
\[
\max_{I\in \mathcal{ID}}\ \max_{v\in G}\ r_{G,I,A}(v) 
\text{\hspace{0.7cm} and \hspace{0.7cm}}
\max_{I \in \mathcal{ID}}\ \frac{1}{n} \sum_{v\in G} r_{G,I,A}(v).  
\]

In the second part of this paper, we consider the running time of the slowest node-averaged on the identifier assignments, and the running time averages on both the identifiers assignments and the nodes, that is, the following measures:

\[
\frac{1}{|\mathcal{ID}|} \sum_{I\in \mathcal{ID}}\ 
\left( \max_{v\in G} \ r_{G,I,A}(v) \right)
\text{\hspace{0.7cm} and \hspace{0.7cm}}
\frac{1}{|\mathcal{ID}|} \sum_{I\in \mathcal{ID}}\ 
\left( \frac{1}{n} \sum_{v\in G} r_{G,I,A}(v) \right).
\]

\paragraph*{Tasks and languages.} The tasks or problems that we want to solve in a distributed manner, are formalized with the notion of \emph{language}. A language $\mL$ is a set of configurations of the form $(G,I,x,y)$, where $G$ is a graph, $I$ an identifier assignment, and $x$ and $y$ are functions from the nodes of the graph to a set of labels. We are interested in constructing these languages, which means that given a graph $G$, an ID assignment $I$ and inputs given by the function $x$, we want to compute locally a function $y$ such that  $(G,I,x,y)$ is in the language $\mathcal{L}$. 
The languages considered are such that for every $(G,I,x)$, there exists a legal output $y$. Note that usually, the identifier assignment is not part of the language \cite{FeuilloleyF16, FraigniaudKP13, NaorS95}, but our results hold for this more general version.  

\paragraph*{Knowledge of the size of the network.} In section \ref{sec:exponential_gap}, we use the most general option regarding the knowledge of~$n$ by the nodes: we assume such knowledge for lower bounds, whereas for upper bounds we do not require it. For section \ref{sec:local_average}, we assume that nodes do not have the knowledge of $n$. For the randomized part we assume this knowledge for the sake of simplicity, and we refer to subsection 4.4 of \cite{KormanSV13} for a technique to remove such assumptions for randomized algorithms. 

\paragraph*{Additional notations.} Throughout the paper, the expression \emph{with high probability} means with probability at least $1-1/n$. Also, for a set $X$, $|X|$ denotes the cardinal of the set.

\section{Exponential gap for a global language}\label{sec:exponential_gap}

The complexity of an ordinary node is bounded by the complexity of the slowest node by definition. In this section, we show that the gap between these two quantities can be exponential. 
\begin{theorem}\label{thm:gap}
The gap between the node-averaged complexity and the classic complexity can be exponential.
\end{theorem}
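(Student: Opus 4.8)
The plan is to exhibit a single concrete language together with an algorithm whose running times are extremely unbalanced: almost all nodes output after $O(1)$ rounds, while one designated node must wait $\Omega(n)$ rounds. Leader election on the cycle $C_n$ is the natural candidate, as hinted in the introduction. The output alphabet is $\{\text{leader},\text{non-leader}\}$, and a configuration $(G,I,x,y)$ is legal iff exactly one node outputs $\text{leader}$. I would take $G=C_n$, no input ($x$ trivial), and the legal ID set to be all injective assignments.

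First I would describe the algorithm and compute its node-averaged complexity. The idea is that the unique leader should be the node of globally minimum identifier, which forces that node to inspect essentially the whole cycle, but \emph{every other} node should be able to certify ``I am not the leader'' quickly. To do this, each node $v$ gathers its radius-$k$ ball and outputs $\text{non-leader}$ as soon as it sees, inside that ball, some node with identifier smaller than $v$'s own; it outputs $\text{leader}$ only if after $n$ rounds (knowledge of $n$ is allowed for the upper bound per the model section, or one runs a pruning-style doubling guess otherwise) it has seen the whole cycle and is still the minimum. The key counting step: fix an ID assignment $I$ and let the identifiers, read around the cycle, be a permutation; a node $v$ stops at round $r(v)$ equal to the distance from $v$ to the nearest node carrying a smaller ID. I would then argue that $\sum_{v} r_{C_n,I,A}(v)$ is small for \emph{every} $I$ — e.g. by a direct combinatorial bound showing that only few nodes can be far from any smaller-ID node (the number of nodes at distance $\ge d$ from every smaller ID is $O(n/d)$ in a suitable sense, giving $\sum_v r(v) = O(n\log n)$, hence node-averaged complexity $O(\log n)$). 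The slowest node, namely the global minimum, has running time $\Theta(n)$, so the classic complexity is $\Theta(n)$, yielding the claimed exponential (here $\Theta(n)$ vs $O(\log n)$) gap.

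I expect the main obstacle to be the upper bound on $\sum_v r(v)$: one must show it holds for the \emph{worst} ID assignment, since node-averaged complexity is defined with a $\max$ over $\mathcal{ID}$. The adversary could try to plant a long ascending run of identifiers to force many nodes to look far; the point to nail down is that even then the running time of node $v$ is bounded by the distance to a smaller ID, and a monovariant/charging argument bounds the total. A clean way is to sort the identifiers and, for the node holding the $i$-th smallest ID, bound its running time by the gap to the nearest among the $i-1$ smaller ones; summing geometric-like contributions gives $O(n\log n)$ overall, hence average $O(\log n)$. (If one only wants the qualitative statement, an even cruder bound — showing the average is $n^{o(1)}$, e.g. $O(\sqrt n)$, suffices to establish an "exponential-type" separation against $\Theta(n)$, but I would aim for the $O(\log n)$ bound.) The remaining steps — verifying correctness of the algorithm (exactly one leader), and that it is uniform if required — are routine.
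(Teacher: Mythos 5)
Your choice of leader election on the cycle is exactly the paper's witness, and your analysis of the node-averaged upper bound is sound; in fact your counting argument is arguably cleaner than the paper's. You observe that a non-leader $v$ stops exactly at round $d(v)$, its distance to the nearest smaller identifier, and that any two nodes with $d(\cdot)\ge d$ must be at distance at least $d$ from each other (otherwise each lies in the other's $(d-1)$-ball, and whichever of the two has the larger identifier would have $d(\cdot)<d$), so $|\{v: d(v)\ge d\}|\le n/d$ and hence $\sum_v r(v)\le \lceil n/2\rceil+\sum_{d\ge 1} n/d = O(n\log n)$ for \emph{every} identifier assignment --- which is precisely what is needed against the $\max$ over $\mathcal{ID}$ in the definition. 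The paper instead marks the extremal-ID node, argues that the algorithm's behaviour on the resulting subpaths is independent of the rest of the ring, and solves the recurrence $a(p)=\max_{k}\{k+a(k-1)+a(p-k)\}$, known to be $\Theta(n\log n)$. Both routes give the same $O(\log n)$ node-averaged bound; yours avoids the induction on the recurrence and the order-invariance bookkeeping.

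The genuine gap is on the other side of the separation. You conclude that ``the slowest node, namely the global minimum, has running time $\Theta(n)$, so the classic complexity is $\Theta(n)$,'' but this only establishes that \emph{your} algorithm has slowest-node complexity $\Theta(n)$. For the theorem, the classic complexity must be understood as a property of the \emph{problem}: one has to rule out that some other algorithm elects a leader with every node stopping in $o(n)$ rounds, since otherwise leader election would witness no gap at all between the two measures. The paper supplies this $\Omega(n)$ lower bound as Proposition~\ref{prop:leader_worst} (folklore, proved for completeness) by a cut-and-paste indistinguishability argument: if every node stopped within $c(n)=o(n)$ rounds, take two rings of size $n_0$ with disjoint identifier sets, extract the $(2c(n_0)+1)$-node views of the two elected leaders, and splice them into a single ring of size $n_0$ padded with fresh identifiers; both spliced nodes retain their views and both output $1$, contradicting uniqueness of the leader. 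You need to add this step (or explicitly invoke the folklore bound) to complete the proof.
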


We illustrate this phenomenon on the classic problem of leader election. The language of leader election is the set of graphs with arbitrary IDs, with no inputs and binary outputs, such that exactly one node has label 1, and the others have label 0.

\begin{proposition}[Folklore]\label{prop:leader_worst}
Leader election on an $n$-node ring requires $\Theta(n)$ rounds (for the slowest node).
\end{proposition}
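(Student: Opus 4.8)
The plan is to prove the two directions separately. The upper bound $O(n)$ is easy: once a node has gathered its $n$-neighbourhood it sees the whole ring (including all IDs), so it can deterministically pick, say, the node of minimum ID as the leader; since every node applies the same rule, exactly one node outputs $1$ and all others output $0$, and this takes at most $n$ rounds (in fact $\lceil n/2\rceil$ suffices, but $O(n)$ is all we need). For this direction we are allowed to use knowledge of $n$, as stipulated in the ``knowledge of the size of the network'' paragraph.

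The core of the proposition is the lower bound $\Omega(n)$ for the slowest node. First I would set up an indistinguishability argument. Fix an algorithm $A$ and suppose, for contradiction, that on every $n$-node ring and every ID assignment every node outputs within $r$ rounds with $r \le (n-1)/2 - 1$, say. The key step is to exhibit two rings, or one ring with two ID assignments, that are locally indistinguishable at radius $r$ around some node, yet for which any correct output must differ at that node. Concretely, take a ring $C$ on $n$ nodes with a fixed ID assignment in which the (unique) leader, forced by correctness, sits at some node $u$; now consider a node $v$ that is ``far'' from $u$, i.e. at distance more than $r$ from $u$ along both arcs. The plan is to modify the IDs (or splice in a rotated copy of the ring) so as to obtain a second legal instance in which the $r$-neighbourhood of $v$ is identical — same graph structure, same IDs — but in which correctness forces the leader to be at $v$ (or forces $v$'s output to flip). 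Since $v$'s output depends only on its $r$-neighbourhood, $A$ produces the same bit for $v$ in both instances, contradicting that a correct algorithm must output $1$ at $v$ in one and $0$ in the other. Choosing the instances so that the region of radius $r$ around $v$ can be made to look ``generic'' (far from wherever the leader ends up) is exactly the standard symmetry/counting obstruction: with $r < (n-1)/2$ there is always room to place the leader outside $v$'s view, and a surgery/renaming argument moves it there while preserving $v$'s view.

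The main obstacle is handling the presence of distinct IDs: unlike the anonymous case, one cannot simply invoke a rotation symmetry of the ring, so the indistinguishability has to be engineered at the level of ID assignments rather than graph automorphisms. The cleanest way I would carry this out is a counting argument: there are $r$-neighbourhoods (paths of $2r+1$ nodes with IDs from the polynomial range) only ``locally'', and if $2r+1 < n$ then for any legal instance at least one node outside the leader's $r$-ball exists whose $r$-view can be realized also in an instance where it is the leader — obtained by cut-and-paste of two legal rings along matching $r$-neighbourhoods. Formally one glues the portion of one ring containing the leader to the portion of another ring containing a candidate $v$; both pieces are consistent with $A$'s local decisions, so the glued ring is a legal instance on which $A$ errs. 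This surgery is routine once the radius bound $r \le (n-1)/2 - 1$ guarantees the two windows are disjoint and there is ``slack'' to reconcile the IDs. Hence every algorithm has a slowest node with running time $\Omega(n)$, and together with the upper bound this gives $\Theta(n)$.
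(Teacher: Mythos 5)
Your cut-and-paste argument in the second half is essentially the paper's proof: the paper takes two rings $R_1$, $R_2$ of the same length $n_0$ with disjoint ID sets, observes that each elects a leader $v_1$, $v_2$ whose deciding view has fewer than $n_0/2$ nodes, concatenates the two views, pads with fresh dummy IDs to a ring of size $n_0$, and concludes that both $v_1$ and $v_2$ still output $1$ --- two leaders, contradiction. Your ID-disjointness ``slack'' is handled there exactly by choosing $R_2$'s identifiers outside the set used in $v_1$'s view, which is possible since the ID space is polynomially large.

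Two points in your write-up need repair, though neither affects the $\Theta(n)$ conclusion. First, your initial framing --- pick one instance, then build a second instance with the same $r$-view at $v$ ``in which correctness forces the leader to be at $v$'' --- does not work as stated: a correct algorithm is free to place the leader anywhere, so no single instance \emph{forces} a particular node to output $1$. The contradiction has to be manufactured as the paper does, by splicing together two windows in each of which the algorithm has already committed to outputting $1$ (or, symmetrically, a ring in which no preserved window contains a leader and the view argument shows no other node can elect itself either); your later ``glue the two leaders' windows'' version is the correct one. Second, your radius threshold $r \le (n-1)/2 - 1$ only guarantees that one window of $2r+1$ nodes fits in the ring; to fit \emph{two disjoint} such windows plus padding you need $2r+1 < n/2$, i.e.\ roughly $r < n/4$, which is what the paper imposes via $2c(n_0)+1 < n_0/2$. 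Since you are refuting $c(n) \in o(n)$, this is only a change of constant, but the bound as you wrote it would not let the surgery go through.
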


This result is part of the folklore, but we prove this statement for completeness. The complexity of leader election in various models is discussed in \cite{AttiyaW04, Lynch96}.

\begin{proof}
Let $A$ be an algorithm for leader election, which has access to the size of the graph. Suppose that the slowest node complexity of $A$ is $c(n)\in o(n)$. Let $n_0$ be a large enough constant such that $2c(n_0)+1<n_0/2$. Consider a ring $R_1$ of length $n_0$. After running the algorithm $A$ on $R_1$, a node $v_1$ is elected to be the leader. This node $v_1$ outputs 1, after at most $c(n_0)$ steps. That is, $v_1$ outputs based on a view that contains at most $2c(n_0)+1$ nodes. 
Because of the definition of $n_0$, this view contains less than $n_0/2$ nodes. Let $I_1$ be the set of identifiers in this view. Now consider another ring $R_2$ of length $n_0$, whose set of identifiers does not contain any of the IDs of $I_1$. 
Again, a node $v_2$ is designated as the leader, and its view contains less than $n_0/2$ nodes. 
Now consider the ring made by concatenating the two views, and adding  dummy nodes with fresh identifiers, to make sure that the ring has size $n_0$. 
Because the identifiers are all distinct, this is a proper instance. 
Then, as $v_1$ and $v_2$ have the same view as in $R_1$ and $R_2$ respectively, with the same graph size $n_0$, they output the same as in $R_1$ and $R_2$ respectively. That is, they both output 1, and thus produce a configuration that is not in the language, which a contradiction.
\end{proof}

\begin{proposition}\label{prop:leader_average}
The complexity of an ordinary node for leader election on an $n$-node ring is $O(\log n)$.  
\end{proposition}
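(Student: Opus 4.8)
The plan is to exhibit a simple uniform algorithm and bound its node-averaged running time on \emph{every} ID assignment. The algorithm is the obvious one: each node keeps expanding its view, and as soon as it sees some node whose identifier is smaller than its own, it outputs $0$ and halts. The node that never sees a smaller identifier is, on a ring, exactly the unique node with the globally minimum ID; it keeps growing its view until that view closes up into a full cycle — which it can detect, since the induced subgraph on the nodes it has seen becomes a cycle graph rather than a path — at which point it has certified that it is the global minimum and outputs $1$. First I would verify correctness: any node outputting $0$ has a witness of smaller ID, hence is not the leader; the minimum node outputs $1$ after seeing the whole ring; and every node eventually outputs (a non-minimum node reaches a smaller ID within $O(n)$ rounds, the minimum node reaches the whole ring within $O(n)$ rounds). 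I would also point out that this algorithm is uniform, i.e.\ needs no a priori knowledge of $n$: the cycle ``wrapping around'' is a purely local termination signal.

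Next I would record the running times. For the minimum-ID node $v_1$ we simply have $r(v_1) \le n$, while for every other node $v$, $r(v)$ equals the distance from $v$ to the nearest node carrying a strictly smaller identifier. To control $\sum_{v} r(v)$ I would use a layer-cake decomposition, $\sum_{v} r(v) = \sum_{t\ge 1} |S_t|$ with $S_t = \{v : r(v)\ge t\}$, noting that $S_t = \emptyset$ for $t>n$.

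The crux is bounding $|S_t|$. If $v\in S_t$ and $v\neq v_1$, then no node within distance $t-1$ of $v$ has a smaller ID, i.e.\ $v$ is the unique minimum-ID node inside its own radius-$(t-1)$ ball. Consequently, any two such nodes must be at ring-distance at least $t$: otherwise each would lie in the other's ball, forcing a contradiction on the ordering of their IDs. A set of points on an $n$-cycle that is pairwise at distance at least $t$ has size at most $\lfloor n/t\rfloor$, so $|S_t|\le n/t + 1$. Plugging in gives $\sum_{v} r(v) \le \sum_{t=1}^{n}(n/t + 1) = n H_n + n = O(n\log n)$, hence $\frac1n\sum_{v} r(v) = O(\log n)$; since the bound holds for every ID assignment, the node-averaged complexity of this algorithm — and therefore of leader election on the $n$-node ring — is $O(\log n)$.

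The main obstacle here is modest and twofold: arguing cleanly that the minimum node can halt without knowing $n$ (handled by the cycle-detection observation, which is specific to the ring), and getting the packing estimate $|S_t|\le n/t+1$ right, since this is precisely where the logarithmic factor is produced. Everything else is routine bookkeeping.
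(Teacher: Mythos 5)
Your proposal is correct, and while the algorithm is essentially the one the paper uses (expand the view; output $0$ upon seeing an extremal-beating ID; the unique extremal node outputs $1$ upon seeing the whole ring --- you use the minimum where the paper uses the maximum, which is immaterial), your \emph{analysis} takes a genuinely different route. The paper marks the largest-ID node of a path, observes that the two resulting subpaths evolve independently, and derives the recurrence $a(p)=\max_k\{k+a(k-1)+a(p-k)\}$, whose solution is $\Theta(n\log n)$ (citing the OEIS for the asymptotics). You instead observe that $r(v)$ is exactly the distance to the nearest smaller ID, apply the layer-cake identity $\sum_v r(v)=\sum_{t\ge 1}|S_t|$, and bound $|S_t|$ by a packing argument: two non-extremal nodes of $S_t$ would each have to be the strict minimum in the other's radius-$(t-1)$ ball, so they are at distance at least $t$ apart, giving $|S_t|\le n/t+1$ and hence a harmonic sum. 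Your packing step is sound (consecutive arc lengths on the cycle are at least the cycle distance, so they sum to at least $|S_t\setminus\{v_1\}|\cdot t\le n$), and the claim $S_t=\emptyset$ for $t>n$ holds since every running time is at most $\lceil n/2\rceil$. What your approach buys: it is more elementary (no recurrence to solve, no external reference), it makes the source of the $\log n$ factor transparent (the harmonic series), and it bounds the sum uniformly over \emph{all} ID assignments rather than computing the worst case via an optimal recursive splitting. What the paper's approach buys: the recurrence pins down the exact worst-case total (attained when the largest ID sits at the midpoint at every level of the recursion), which also shows the $\Theta(n\log n)$ bound on the sum is tight for this algorithm, whereas your argument only gives the upper bound --- which is all the proposition requires.
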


\begin{proof}
Consider the following algorithm. Each node increases its radius until one of the two following situations occurs. 
First, if it detects an ID that is larger than its own, it outputs 0. 
Second, if it can see the whole ring, and can detect no ID is larger than its own, then it outputs 1. It is easy to see that this algorithm is correct: exactly one node will output 1, the node with the largest ID. 
Note that this algorithm is order-invariant in the sense of \cite{NaorS95}, \emph{i.e.} the algorithm does not take into account the identifiers themselves, but only their relative ordering in its view. In particular, the algorithm does not require the knowledge of $n$. We show that the node-averaged complexity of this algorithm is logarithmic in $n$.  

Let us first make an observation. Consider the nodes with the $k$ largest identifiers, and mark them. 
The nodes that are not marked form $k$ paths. (Some of these paths can be empty, if two marked nodes are adjacent). A key property is that the behaviour of the algorithm on one path is independent of the other paths. 
More precisely, we claim that on a given path the algorithm will have the same behaviour whatever the sizes and the identifier distributions of the other paths are. Fix a path, and a node $v$, in this path. 
By definition, $v$ has an identifier that is smaller than the ones of the two marked nodes at the endpoints of the path. 
Therefore, either it stops before reaching an endpoint, or exactly when it reaches one of the marked nodes. 
As a consequence, such a node will output based only on its knowledge of the path. 
This simple observation implies that we can study the behaviour of the algorithm on each path separately. Let $p$ be an integer, and let us consider a path of length $p$ with two additional marked nodes at each endpoint. 
Thanks to order-invariance, it is sufficient to study the behaviour of the algorithm on this path with all the relative ordering of identifiers. 
Let $a(p)$ be the maximum over all these identifier assignments of the sum of the running times of the nodes. We claim that this function obeys the following recurrence relation:
\[
 a(p) = \max_{1 \leq k \leq \lceil p/2 \rceil}\left\{ k + a(k-1) + a(p-k) \right\}.
\]
Consider the node $v$ with the largest identifier in the path (excluding the marked endpoints). As noted before, it must reach one of the endpoints to stop. Then if we mark this node, the behaviour of the algorithm on the two subpath is independent of the context, and the maximum sums of running times in each path is $a(p_1)$ and $a(p_2)$ for the first subpath of length $p_1$ and the second of length $p_2$ respectively. Then the only parameter is the distance $k$ from $v$ to the closest endpoint. Given such an integer $k$, $a(p)$ is then equal to $k + a(k-1) + a(p-k)$. One can then check by induction that this maximum is always met for the value $k=\lceil p/2 \rceil$. Then an alternative formula is: 
\[
a(p)=\left\lceil \frac{p}{2} \right\rceil + a\left(\left\lceil \frac{p}{2}\right\rceil \right) + a\left(\left\lceil \frac{p}{2} \right\rceil - 1 \right).
\]
The sequence $a(n)$, defined by the induction formula above, along with initial values $a(0)=0$ and $a(1)=1$, is known to be in $\theta(n\log n)$. For references and more information about this sequence, see \cite{oeisA000788}.

When running the algorithm, the node with the largest identifier will see the whole graph and detect it has the largest ID, and then output 1. Its running time is then $\lceil n/2 \rceil$. We can then mark this node, and apply the result of the previous paragraphs to the path made by the remaining nodes. 
Consequently, the sum of the running times of the nodes is equal to $\lceil n/2 \rceil+a(n-1)$ which is in $\theta(n\log n)$. Thereafter, the complexity of an ordinary node is logarithmic in $n$. 
\end{proof}

Note that analysis of the same flavour already exist in the literature, see for example \cite{Santoro06} p.125. Theorem \ref{thm:gap} follows from propositions \ref{prop:leader_worst} and \ref{prop:leader_average}.

\section{Local average lemma and application}\label{sec:local_average}

This section is devoted to proving that, for local languages on very sparse graphs, the complexity of an ordinary node is basically the same as the one of the slowest node. This proof is based on a \emph{local average lemma}. 
Given a graph and an algorithm, let us define informally a \emph{peak}, as a node whose running time is much larger than the average running time in its neighbourhood at some distance. The lemma states that, for local languages, and for algorithm that are somehow optimal, there is no such peak. 

\subsection{Intuition on 3-colouring of a ring}\label{subsec:intuition}
In order to give an intuition of the lemma and its proof, and to justify the notions we introduce in the next paragraph, let us consider the example of 3-colouring a cycle. 
Consider an algorithm for the problem, and three adjacent nodes $u$, $v$ and $w$, in this order in a cycle. We claim that if $r(v)>\max(r(u),r(w))+1$, then the algorithm can be speeded up. Note that after $\max(r(u),r(w))+1$ steps, $v$ has a view that contains the whole views of $u$ and $w$. Then $u$ can simulate the computations of $u$ and $w$, and deduce the colours they output, and output a non-conflicting colour. 
As a consequence if one wants to prove a lower bound on the average of the running times, one can assume that $r(v)\leq \max(r(u),r(w))+1$. 
This leads to the fact there is no peak: every node with high running time has at least one neighbour with similar running time, and then the average running time at distance 1 cannot be smaller than half the running time of $v$. The lemma is a generalization of this observation, for further neighbours, more general graphs and more general problems.

\subsection{Additional definitions}
In order to state the lemma we need to introduce a few notions.

\paragraph*{Class $\LCLS$.}
We consider a large class of distributed problems that we call $\LCLS$, which includes the well-known class of $\LCL$ problems \cite{NaorS95}, and the more general class $\LD$ \cite{FraigniaudKP13}.
A language $\mL$ is in $\LCLS$, if there exists a constant-time \emph{verification algorithm}, that is an algorithm $\mV$ performing in a constant number of rounds, with binary output, \emph{accept} or \emph{reject}, such that the following holds. For every configuration $(G,I,x,y)$, $\mV$ accepts at every node, if and only if the graph is in the language $\mL$. The running time of $\mV$ is called the \emph{verification} radius.
No bound on the size of the inputs and output is necessary, and the language can depend on the identifiers.

\paragraph*{Graphs with linearly bounded growth.} A graph has \emph{linearly bounded growth} if there exists a constant $q$ such that if any ball of radius $r$ contains at most $qr$ nodes. The constant $q$ is called the growth parameter. For example a cycle has linearly bounded growth, with parameter $3$.

\paragraph*{Minimal algorithms.}
We would like to write a statement of the following form: given a node $v$ whose running time is $r$, the nodes of its neighbourhood have running times whose average is roughly $r$. This type of statement cannot hold in general as we could artificially increase the radius of a node by modifying the algorithm. But as we are interested in lower bounds we can consider algorithms that are in some sense optimal. 
More precisely, let $A$ and $A'$ be two distributed algorithms for some language $\mL$. We say that $A$ is smaller than $A'$, if on every graph, every ID assignment and inputs, and on every node, the running time of $A$ is at most the running time of $A'$. For lower bounds on the node-averaged complexity, it is sufficient to study algorithms that are minimal for this ordering. Indeed, if an algorithm that is not minimal has low complexity, then there exists a minimal algorithms that has at most this complexity.

\paragraph*{Knowledge of $n$.} In this section the algorithm do not have the knowledge of $n$.

\paragraph*{Additional notations.}
Let us denote by $B(v,k,G,I,x)$ the subgraph of $G$, with identifiers~$I$, and inputs~$x$, induced by the nodes at distance at most~$k$ from a node~$v$.
Likewise, given two integers $k_1<k_2$, let $S(v,k_1,k_2,G,I,x)$ be the induced graphs with IDs and inputs, induced by the set of nodes whose distance to~$v$ is at least $k_1$ and at most $k_2$. Such set of nodes are referred to as \emph{crowns} in the following. When the context is unambiguous, we may omit $G$, $I$ and $x$. 

\subsection{Lemma statement}
\begin{lemma}[Local average lemma]\label{lem:local_average}
Let $\mL$ be a language in $\LCLS$, and let $\mathcal{F}$ be a graph family with linearly bounded growth and $A$ be a minimal algorithm for $\mL$. There exists two positive constants $\alpha$ and $\beta$, such that for any graph of $\mathcal{F}$, ID assignment, inputs, and node $v$, the average of the running time of $A$ on the nodes at distance at most $r(v)/2$ from $v$, is at least $\alpha.r(v)-\beta$.
\end{lemma}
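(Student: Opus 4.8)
The plan is to argue by contradiction: assume that there is a node $v$ whose ball $B(v, r(v)/2)$ has average running time much smaller than $r(v)$, and build a strictly smaller algorithm $A'$ that agrees with $A$ everywhere except that it lets $v$ (and a few nearby nodes) output earlier. This contradicts minimality of $A$. First I would set up the counting that turns a small \emph{average} into useful structural information. Let $r = r(v)$ and suppose the average over $B(v, r/2)$ is less than $\alpha r - \beta$. Since the graph has linearly bounded growth with parameter $q$, the ball $B(v, r/2)$ has at most $q r / 2$ nodes. A Markov-type argument then shows that for a suitable choice of constants, most nodes in the ball — in particular, all nodes on some crown $S(v, k_1, k_2)$ with $k_2 - k_1$ at least the verification radius $t$, and with $k_1$ bounded below by a constant fraction of $r$ — have running time well below $r$, say at most $r/2 - t$. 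The key point is that such a crown forms a ``shield'': every node in the inner ball $B(v, k_1)$ has its $A$-output determined, and every node just outside has a view of radius $t$ that is entirely contained in $B(v, k_2) \subseteq B(v, r)$, hence can already run its verification algorithm $\mV$ on the $A$-outputs.

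The core of the construction is the definition of $A'$. Outside $B(v, k_1)$, $A'$ behaves exactly like $A$ (same running times, same outputs). Inside $B(v, k_1)$, once a node has a view of radius $\max(k_2, \text{its own }A\text{-radius})$ — which, crucially, is at most something like $r/2 + O(1) < r$ — it simulates $A$ on its entire $B(v, k_2)$-view to compute the outputs that $A$ would produce there, and adopts those outputs. Since all these simulated outputs coincide with the actual $A$-outputs, the global configuration produced by $A'$ is identical to the one produced by $A$, hence in $\mL$; so $A'$ is a correct algorithm. Moreover every node's $A'$-running time is at most its $A$-running time, and $v$'s running time strictly decreased (from $r$ to at most $r/2 + O(1)$, which is smaller once $r$ exceeds a constant). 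This contradicts minimality, establishing the lemma for $r$ above some threshold; for $r$ below the threshold the inequality $\alpha r - \beta \le 0$ can be arranged by choosing $\beta$ large enough, so it holds trivially.

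One subtlety to handle carefully: I must make sure the modified algorithm $A'$ is still a genuine distributed algorithm — in particular that a node can \emph{locally} decide it is in the ``special region'' and should switch to simulation mode. This is where being inside the shield crown matters: the identity of $v$, the radii $k_1, k_2$, and the fact that the crown has uniformly small running times are all visible within a bounded neighborhood, so each affected node can detect the situation from a view of radius $O(r)$ — but one has to check this detection itself does not cost more rounds than we save. The cleanest route is to have $A'$ simulate, for \emph{every} node, the following: run $A$; if at some point your current view reveals a node $u$ and a crown around $u$ of small running times that ``encloses'' you, recompute your output by simulating $A$ on the enclosed ball. Verifying that this rule is well-defined, consistent (two different enclosing crowns give the same recomputed output, since both equal the $A$-output), and that it never increases any running time is the main technical obstacle; the rest is the linearly-bounded-growth counting, which is routine. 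Tuning $\alpha$ and $\beta$ as explicit functions of $q$ and the verification radius $t$ is the last bookkeeping step.
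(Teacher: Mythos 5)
Your overall architecture is the right one and matches the paper's: argue by contradiction with minimality, locate a crown $S(v,k_1,k_2)$ of width about the verification radius $t$ whose nodes all have small running time, use it as a shield to let the inner nodes commit early, and finish with the linearly-bounded-growth counting. The counting/Markov step is sound (with $k_1$ bounded \emph{above} by a fraction of $r$ strictly below $1/2$, so that the saving is real), and is essentially the contrapositive of the summation the paper performs.

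The gap is in the correctness of $A'$. You assert that a node in $B(v,k_1)$ can ``simulate $A$ on its entire $B(v,k_2)$-view to compute the outputs that $A$ would produce there,'' and that ``all these simulated outputs coincide with the actual $A$-outputs,'' so the global configuration is unchanged. This is false for $v$ itself: by hypothesis $r(v)=r$, so $A$'s output at $v$ is a function of its radius-$r$ view, which is not contained in $B(v,k_2)$ when $k_2\approx r/2$; no computation confined to $B(v,k_2)$ can reproduce it. The shield crown guarantees that the \emph{crown's} outputs are determined by the ball (their running times are small), not that the \emph{inner} nodes' outputs are — indeed, local determinacy of $v$'s output is morally what the lemma is trying to establish, so you cannot assume it. Hence $A'$ necessarily changes some outputs, and you must prove the new configuration is still in $\mL$; this is exactly where the $\LCLS$ hypothesis is needed, and your argument never actually invokes it (if the configuration were literally unchanged, local checkability would be irrelevant — a red flag). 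The paper's resolution is to make $A'$ non-uniform: it hard-codes the ball $B$ of the specific instance $(G,I,x)$ where the inequality fails, together with $A$'s outputs \emph{on $G$} for the nodes within distance $k$ of $v$; a node that recognizes $B$ in its view outputs the corresponding hard-coded label. Correctness is then argued through the verifier $\mV$: only nodes within distance $k+t$ of the center can reject, their radius-$t$ views lie in $B(v,k+2t)$, and on that region the outputs of $A'$ coincide with those of $A$ on $G$ (inner nodes by construction, crown nodes because their small running times confine their views to $B$), so $\mV$ accepts since $A$ is correct on $G$. A secondary slip: stopping at radius $\max(k_2,\text{own }A\text{-radius})$ can \emph{increase} the running time of nodes whose $A$-radius is below $k_2$; you must modify only the nodes whose $A$-running time exceeds the new stopping round, as the paper does via its condition (1).
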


\subsection{Proof of the lemma}

Let $\mL$, $A$, $G$, $I$, $v$ and $x$ be respectively, a language, a minimal algorithm, a graph, an ID assignment, a node and an input assignment as in the lemma.
In this proof, several graphs, inputs, IDs and algorithms are considered ; when not specified, we refer to the elements we have just defined. For example $r(v)$ refers to the running time of $A$ on $v$ in G, with $I$ and $x$. 
Let $\mV$ be the verification algorithm of $\mL$, and let $t$ be the verification radius of $\mV$. Let $q$ be the growth parameter of $\mathcal{F}$.

The proof is in two steps, that we highlight with two technical claims. The first claim relates the running time of a node with the running time of the nodes in a crown around it. The proof uses a simulation argument as in the example of 3-colouring in subsection \ref{subsec:intuition}, and we call it the \emph{simulation step} in the following. 

\begin{claim}[Simulation step]\label{clm:rv_bound}
For every integer $k$:
$$r(v) \leq 2k+2t + \max_{u\in S(v,k,k+2t)}r(u).$$ 
\end{claim}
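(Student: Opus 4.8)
The plan is to use a simulation argument: a node $v$ running for $r(v)$ rounds must, by minimality of $A$, really need that many rounds, so there must be some ``reason'' at distance about $r(v)$ that forces it to wait — and that reason is a far-away node that is itself slow.

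\textbf{Set-up.} Fix the integer $k$ and let $m = \max_{u\in S(v,k,k+2t)}r(u)$. The goal is to show that if $v$ is given the view $B(v, 2k+2t+m)$, it can already safely commit to an output; minimality of $A$ then forces $r(v)\le 2k+2t+m$, which is exactly the claim. So suppose towards a contradiction (or rather, directly) that $v$ only knows $B(v,2k+2t+m)$ and wants to output.

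\textbf{The simulation.} Every node $u\in S(v,k,k+2t)$ lies at distance at most $k+2t$ from $v$, so $B(u, r(u)) \subseteq B(v, k+2t+r(u)) \subseteq B(v, k+2t+m) \subseteq B(v, 2k+2t+m)$ (using $k\ge 0$; in fact $k+2t+m \le 2k+2t+m$ holds whenever $k\ge 0$). Hence $v$ can internally simulate algorithm $A$ on each such $u$ and learn the output that $A$ assigns to every node of the crown $S(v,k,k+2t)$. Now $v$ picks its own output as follows: it runs $A$ on itself but ``pretends'' the graph ends appropriately; more carefully, the point of making the crown have width $2t$ (twice the verification radius) is that the outputs already fixed on $S(v,k,k+2t)$ certify, via $\mV$, that no matter what is glued outside, the configuration can be completed consistently — so $v$ may choose \emph{any} output of $A$ on the ball $B(v,k)$ that is $\mV$-consistent with the simulated crown outputs, and by correctness of $A$ on the actual graph $G$ at least one such choice exists (namely the real output of $A$ on $v$). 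Since $\mV$ has radius $t$ and the crown has width $2t$, every check performed by $\mV$ at any node inside $B(v,k+t)$ is entirely determined by data $v$ has computed, so $v$ can verify consistency locally and commit. This exhibits an algorithm with running time at most $2k+2t+m$ on $v$; by minimality $r(v)\le 2k+2t+m$.

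\textbf{Main obstacle.} The delicate point is the last step: explaining precisely why $v$ can legitimately \emph{commit} to an output after the simulation, i.e. why some choice of output on $B(v,k)$ is guaranteed to extend to a globally legal labelling regardless of the unseen part of $G$. The width-$2t$ crown is what makes this work — the already-decided outputs in the crown, together with any $\mV$-accepting completion on $B(v,k)$, force all verification checks in between to pass, and such a completion exists because the true run of $A$ provides one. I would state this as a small sub-lemma (a ``locality of completability'' observation) and verify the radius bookkeeping: distance from $v$ to any crown node is $\le k+2t$, its running time is $\le m$, so all needed views sit inside radius $k+2t+m \le 2k+2t+m$, and the extra $k$ of slack is exactly what lets the argument also cover, symmetrically, information $v$ needs about $B(v,k)$ itself. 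The linearly-bounded-growth hypothesis is not needed for this claim — it will only enter in the second (averaging) step — so I would not invoke $q$ here.
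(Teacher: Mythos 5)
Your radius bookkeeping is fine and you are right that bounded growth is not needed here, but the step you yourself flag as ``the delicate point'' is a genuine gap, and the fix you sketch does not work for the class $\LCLS$ as defined in this paper. You argue that $v$ may commit to \emph{any} output on $B(v,k)$ that is $\mV$-consistent with the simulated crown outputs, because the width-$2t$ crown ``certifies that no matter what is glued outside, the configuration can be completed consistently.'' That is precisely the \emph{completability} property, which is \textbf{not} part of the definition of $\LCLS$ (the paper introduces completable $\LCLS$ languages only later, for the randomized section, and notes that e.g.\ sinkless orientation is an $\LCL$ that is not completable). Moreover, even granting completability, $v$ only controls its own output: the other nodes of $B(v,k)$ and the rest of $G$ keep running $A$ on their own (larger) views, so if $v$ picks a locally $\mV$-consistent output different from the one $A$ actually produces, nothing forces the neighbours' real outputs to match the completion $v$ had in mind, and $\mV$ may reject near $v$. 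You cannot escape this by saying ``the real output of $A$ on $v$ is one consistent choice,'' because $v$ cannot identify that choice from the view $B(v,2k+2t+m)$ --- if it could, there would be nothing to prove.

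The paper's proof avoids both problems by arguing by contradiction with minimality via a \emph{non-uniform, globally coordinated} modification. It hard-codes the specific ball $B=B(v,k+2t+M)$ of the specific instance $(G,I,x)$ into a new algorithm $A'$: any slow node $w$ (running time $\geq 2k+2t+M$) that finds itself within distance $k$ of a centre whose $(k+2t+M)$-neighbourhood equals $B$ stops at round $2k+2t+M$ and outputs what $A$ outputs on the \emph{corresponding node of $G$}. Because all modified nodes copy outputs from one and the same coherent run of $A$ on $G$, and because the unmodified crown nodes have views contained in $B$ (their running times are at most $M$ and they sit at distance at least $M$ from the boundary of $B$) and hence already agree with that run, the entire ball of radius $k+2t$ around the centre carries exactly $A$'s outputs on $G$; since $A$ is correct on $G$ and $\mV$ has radius $t$, every node that could possibly reject accepts. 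No completability is needed, and coherence among neighbours is automatic. Your simulation of the crown is the right ingredient, but you need to package it as a modification of the algorithm at \emph{all} affected nodes (anchored to the fixed instance $G$) rather than as a unilateral early commitment by $v$.
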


\begin{proof}[Proof (Simulation step)]
For the sake of contradiction, suppose the inequality does not hold for some fixed~$k$.
 Let us use the following notations: 
\[
M=\max_{u\in S(v,k,k+2t)}r(u) 
\text{\hspace{0.7cm} and \hspace{0.7cm}}
B=B(v,k+2t+M).
\]
As in the simulation of subsection \ref{subsec:intuition}, we show how to craft a new algorithm $A'$, smaller than $A$. 

\paragraph*{Definition of $A'$.} Consider a node $w$ of a graph $H$, with ID assignment $I_H$, and inputs $x_H$. The behaviour of~$A'$ on this graph differs from the behaviour of $A$ only if the following conditions are fulfilled (see figure \ref{apx:fig:local_average}): 
\begin{enumerate}
\item[(1)] The running time of $A$ on $w$ in $(H, I_H, x_H)$, is at least $2k+2t+M$ ;
\item[(2)] The node $w$ is at distance at most $k$ from a node $v_H$ whose neighbourhood at distance $k+2t+M$ is exactly  $B$. 

\end{enumerate}

\begin{figure}[h!]
\begin{center}
\scalebox{0.8}{
\begin{tabular}{cc}
\scalebox{1.5} 
{
\begin{pspicture}(0,-1.72)(4.82,1.72)
\definecolor{bleu1}{rgb}{0.35,0.35,0.7}
\definecolor{bleu2}{rgb}{0.5,0.5,0.85}
\definecolor{bleu3}{rgb}{0.7,0.7,1.0}
\definecolor{jaune}{rgb}{1.0, 0.90, 0.6}
\definecolor{orange}{rgb}{1.0, 0.75, 0.6}
\pspolygon[linearc=0.3, linewidth=0.04,fillstyle=solid, fillcolor=jaune, linecolor=orange](0.0,1.8)(3.6,1.8)(4.8,0.7)(4.8,-1.8)(0.0,-1.8)
\pscircle*[linecolor=bleu3](2.9,0.0){1.5}
\pscircle*[linecolor=bleu2](2.9,0.0){0.9}
\pscircle*[linecolor=bleu1](2.9,0.0){0.5}

\psline[linewidth=0.02cm,arrowsize=0.05291667cm 2.0,arrowlength=1.4,arrowinset=0.4]{<->}(1.86,-1.06)(2.28,-0.64)
\psline[linewidth=0.02cm,arrowsize=0.05291667cm 2.0,arrowlength=1.4,arrowinset=0.4]{<->}(2.16,-0.48)(2.52,-0.24)
\psline[linewidth=0.02cm,arrowsize=0.05291667cm 2.0,arrowlength=1.4,arrowinset=0.4]{<->}(2.4,0.0)(2.85,0.0)

\psdots[dotsize=0.12](2.9,0.0)
\usefont{T1}{ptm}{m}{n}
\rput(1.8,0.7){{\small $B$}}
\rput(0.3,1.4){{\small $G$}}
\rput(1.9,-0.7){{\footnotesize $M$}}
\rput(2.2,-0.1){{\footnotesize $2t$}}
\rput(2.6,0.2){{\footnotesize $k$}}
\rput(3.0,0.1){{\footnotesize $v$}}
\end{pspicture} 
}
&
\scalebox{1.5} 
{
\begin{pspicture}(0,-1.72)(4.82,1.72)
\definecolor{bleu1}{rgb}{0.35,0.35,0.7}
\definecolor{bleu2}{rgb}{0.5,0.5,0.85}
\definecolor{bleu3}{rgb}{0.7,0.7,1.0}
\definecolor{jaune}{rgb}{1.0, 0.90, 0.6}
\definecolor{orange}{rgb}{1.0, 0.75, 0.6}

\pspolygon[linearc=0.3, linewidth=0.04,fillstyle=solid, fillcolor=jaune, linecolor=orange](0.0,1.81)(4.88,1.81)(4.84,-0.79)(3.84,-1.81)(0.02,-1.81)
\pscircle*[linecolor=orange](1.91,0.0){1.75}
\pscircle*[linecolor=bleu3](1.76,-0.03){1.5}
\pscircle*[linecolor=bleu2](1.76,-0.03){0.9}
\pscircle*[linecolor=bleu1](1.76,-0.03){0.5}
\psdots[dotsize=0.12](1.76,-0.03)
\psdots[dotsize=0.12](2.0,-0.03)

\psline[linewidth=0.02cm,arrowsize=0.05291667cm 2.0,arrowlength=1.4,arrowinset=0.4]{<->}(2.05,-0.03)(3.65,-0.03)

\usefont{T1}{ptm}{m}{n}
\rput(0.7,0.7){{\small $B$}}
\rput(0.3,1.4){{\small $H$}}
\rput(1.6,0.15){{\footnotesize $v_H$}}
\rput(2.0,0.15){{\footnotesize $w$}}
\rput(3.0,0.5){{\footnotesize $2k+2t$}}
\rput(3.0,0.15){{\footnotesize $+M$}}

\end{pspicture} 
}
\end{tabular}
}
\end{center}
\caption{\label{apx:fig:local_average}
This figure illustrates the definition of the algorithm $A'$ in the proof of lemma \ref{lem:local_average}. On the left is the original graph $G$ with node $v$, along with the ball $B$ around it. The behaviour algorithm $A'$ differs from the algorithm $A$ only if it is in the situation of the node $w$ on the right: it has running time at least $2k+2t+M$, and it is at distance at most $k$ from a node whose $(k+2t+M)$-neighbourhood is exactly $B$.
}
\end{figure}
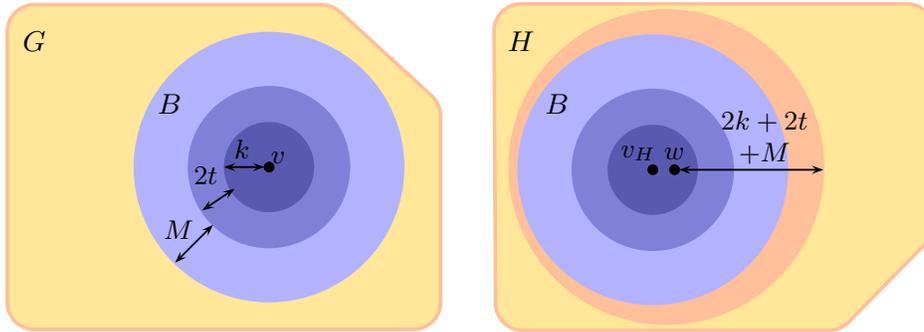 

When the two conditions are fulfilled, let $w_G$ be the node of $G$, whose position in $B$, ID, and input, are the same as the ones of $w$ in $H$. In this precise case, the algorithm $A'$ stops at round $2k+2t+M$, and outputs the same label as $A$ does on $w_G$, in $(G, I, x)$. Loosely speaking, if a node detects that it is in the core of a ball identical $B$, then it simulates the algorithm $A$ on $G$ on the adequate node, and outputs the same label. (Note that because the running time is $2k+2t+M$, and because $k$ is at distance at most $k$ from $v_H$, it can see whether condition (2) is fulfilled or not.)

\paragraph*{Correctness of $A'$.}
Consider an arbitrary graph $(H,I_H,x_H)$. Remember that $A$ is a correct algorithm for~$\mL$. Note that by definition the output of $A'$ may differ from the one of $A$ only if $H$ contains $B$, thus $A'$ is correct on every graph not containing $B$. 
Then if $H$ contains $B$, only the nodes at distance at most $k$ from~$v_H$, the center of the ball $B$, may have changed their outputs. 
To decide if these outputs are correct for the language~$\mL$, we use the verification algorithm~$\mV$. Remember that this algorithm has constant verification radius $t$. 
As only the inputs at distance at most $k$ from $v_H$ are modified, the nodes where $\mV$ may reject are the ones in the ball of radius $k+t$ (centred on $v_H$). The view of the verification algorithm on such a nodes are then included in the ball of radius $k+2t$. 

We claim that the outputs of $A'$ in the ball of radius $k+2t$ around $v_H$ in $H$,  correspond to the outputs of $A$ in the ball of radius $k+2t$ around $v$ in $G$. This claim implies that the verification algorithm will accept on the nodes of the ball of radius $k+t$. Indeed, the verification algorithm will have the same view as on $G$ with the outputs of $A$, and as $A$ is correct, it accepts. This in turn implies that $A'$ is correct.

Let us prove the claim of the previous paragraph. The nodes at distance $k$ from $v_H$ have by definition the same outputs as the corresponding nodes in $G$. The nodes in the crown $S(v,k+1,k+2t,H,I_H,x_H)$ have the same view in $(H,I_H,x_H)$ as in $(G,I,x)$, because these graphs coincide on $B$, and no node has running time large enough to see something outside of $B$. Indeed these nodes are at distance at least $M$ from the boundary of $B$, and by definition $M=\max_{u\in S(v,k,k+2t)}r(u)$.

\paragraph*{$A'$ is strictly smaller than $A$.}
The algorithm $A'$ is the same as $A$ except on the nodes which fulfil the two conditions  at the beginning of this proof. The nodes that have been modified had running time at least $2k+2t+M$ with $A$ and have running time exactly $2k+2t+M$ with $A'$. Thus $A'$ is smaller than $A$, and it is strictly smaller because we assumed that $v$ had running time strictly larger that $2k+2t+M$ in $A$, and it has running time exactly $2k+2t+M$ in $A'$.

Finally, $A'$is a correct algorithm, strictly smaller that $A'$, thus $A$ is not minimal. This is a contradiction. 
\end{proof}

Note that the hypothesis $\mL \in \LCLS$ is crucial in the later proof: it is because the correctness of the output is evaluated locally that it is safe to change some outputs, checking that these new outputs locally match the rules of the language.

We now move to the second part of the proof of lemma \ref{lem:local_average}, which is proving a second technical claim. 

\begin{claim}[Summation]There exists two constants $\alpha$ and $\beta$ such that:
\[ 
\alpha.r(v)-\beta \leq \frac{1}{|S(v,1,r(v)/2)|}\sum_{u\in S(v,1,r(v)/2)}r(u). 
\]
\end{claim}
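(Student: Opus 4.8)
The plan is to iterate the simulation step (Claim~\ref{clm:rv_bound}) to produce a sequence of nested crowns, each containing a node whose running time is not much smaller than $r(v)$, and then to average over all of them at once. Concretely, set $s = 2t$ and apply Claim~\ref{clm:rv_bound} with $k=0$: there is a node $u_1 \in S(v,0,s)$ with $r(u_1) \geq r(v) - 2s$. More generally, if we have a node at distance roughly $ik$ from $v$ with running time at least $r(v) - O(ik)$, we would like to iterate. To keep things clean I would instead argue directly: for each $j$ with $1 \le j \le r(v)/(4t+2)$ (say), repeatedly invoking the simulation step shows there is a node $u_j$ at distance at most $j\cdot(2t)$ from $v$ with $r(u_j) \geq r(v) - j\cdot(4t+2)$. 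Choosing the range of $j$ so that $j\cdot 2t \le r(v)/2$ and $j\cdot(4t+2) \le r(v)/2$ keeps all these nodes inside $S(v,1,r(v)/2)$ (shifting indices by one to avoid $v$ itself, which only changes constants) and keeps each $r(u_j)$ at least $r(v)/2$. This produces $\Omega(r(v)/t) = \Omega(r(v))$ nodes — call this count $N$ — each with running time $\geq r(v)/2 - O(1)$.

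Now I would pass to the denominator. By linearly bounded growth, $|S(v,1,r(v)/2)| \le |B(v,r(v)/2)| \le q\cdot r(v)/2$. Hence
\[
\frac{1}{|S(v,1,r(v)/2)|}\sum_{u\in S(v,1,r(v)/2)}r(u)
\;\ge\; \frac{1}{q\,r(v)/2}\sum_{j} r(u_j)
\;\ge\; \frac{N\cdot\bigl(r(v)/2 - O(1)\bigr)}{q\,r(v)/2}.
\]
Since $N = \Omega(r(v)/t)$ and $q,t$ are constants, the right-hand side is $\Omega(r(v)) - O(1)$, which gives the desired $\alpha\, r(v) - \beta$ for suitable constants $\alpha,\beta$ depending only on $q$ and $t$. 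One subtlety: the $u_j$ need not be distinct, so I should either keep them distinct by construction (choose them at geometrically spaced distances $2t, 4t, 6t, \dots$, noting Claim~\ref{clm:rv_bound} lets us land in an annulus, not a ball, so we can force the new node into a fresh shell) or simply select one representative per shell $S(v, 2t(j-1)+1, 2tj)$ and observe these shells are disjoint.

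The main obstacle I anticipate is the bookkeeping in iterating the simulation step: Claim~\ref{clm:rv_bound} as stated bounds $r(v)$ in terms of the \emph{max} over one crown $S(v,k,k+2t)$, so I get \emph{one} far node per application, and I must re-apply the claim centered at that far node to propagate outward — but the claim is stated for the fixed node $v$, so I need the (immediate) observation that it holds verbatim with $v$ replaced by any node, since minimality of $A$ is global. Given that, each re-centering costs an additive $O(t)$ in running time and an additive $2t$ in distance, and after $\Theta(r(v)/t)$ steps we have exhausted the budget $r(v)/2$; ensuring the accumulated additive loss stays below $r(v)/2$ (so running times stay positive and $\ge r(v)/2 - O(1)$) is exactly what pins down $\alpha$ and $\beta$. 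Everything else is a direct combination with the growth bound.
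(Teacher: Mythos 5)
Your argument, in the ``one representative per disjoint annulus'' form that you reach at the end, is correct and is a mild variant of the paper's proof rather than a different method: both rest entirely on the simulation step plus the linear growth bound, and both derive $\sum_{u\in S(v,1,r(v)/2)}r(u)=\Omega(r(v)^2)$ before dividing by $|S(v,1,r(v)/2)|\leq q\cdot r(v)/2$. The difference is only in the accounting. The paper takes the inequality $r(v)-2k-2t\leq \max_{u\in S(v,k,k+2t)}r(u)\leq\sum_{u\in S(v,k,k+2t)}r(u)$ and sums it over \emph{all} $k$ from $1$ to $r(v)/2-2t$, then observes that each node is counted in at most $2t+1$ crowns, which yields the quadratic lower bound after dividing by $2t+1$; you instead apply the claim only at $\Theta(r(v)/t)$ values of $k$ spaced by $2t+1$, extract a single witness of running time at least $r(v)-2k-2t\geq r(v)/2-O(1)$ from each of the resulting pairwise disjoint annuli, and lower-bound the sum by these witnesses alone. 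Your version avoids the multiplicity bookkeeping; the paper's avoids having to argue disjointness. One caution: the iteration-with-re-centering scheme you lead with should be dropped, not just patched for distinctness. Re-applying the simulation step centred at $u_j$ gives a witness at distance between $1$ and $1+2t$ from $u_j$, and nothing prevents that witness from drifting back toward $v$ (or coinciding with an earlier $u_i$), so distances from $v$ do not accumulate and you cannot place the $u_j$ in distinct shells that way. The correct move is the one in your own parenthetical: apply Claim~\ref{clm:rv_bound} directly at $v$ with varying $k$, since the crown $S(v,k,k+2t)$ forces the witness to lie at distance at least $k$ from $v$; no re-centering (and hence no ``the claim holds for every node'' observation) is needed. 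With that, and the trivial handling of the case $r(v)=O(t)$ by choosing $\beta$ large enough, the constants $\alpha,\beta$ depend only on $q$ and $t$ exactly as in the paper.
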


To make the proof look more natural, we again give an intuition on an easier case. Consider a simplified version of the inequality of the simulation claim: $r(v)\leq \max_{u\in S_k}r(u)$, where~$S_k$ is the set of nodes at distance exactly~$k$. The quantity $\max_{u\in S_k}r(u)$ is upper bounded by $\sum_{u\in S_k}r(u)$. 
Then, summing both terms of the inequality, for $k$ ranging from 1 to $r(v)$, one gets $r(v)^2\leq \sum_{u\in S}r(u)$, where $S$ is the ball of radius~$r(v)$, without~$v$. 
Now the bounded growth helps us to bound the cardinal of $S$. Namely, as the growth parameter is $q$, then there are at most~$q\cdot r(v)$ nodes in~$S$. Then $\sum_{u\in S}r(u)\leq q\cdot r(v) \cdot a_S$, where $a_S$ is the average running time in~$S$. Then $r(v)^2 \leq q\cdot a_S \cdot r(v)$, thus $\frac{1}{q} r(v) \leq a_S$. This corresponds to a simplified form of the summation claim, with $\alpha=\frac{1}{q}$, and $\beta=0$.

\begin{proof}[Proof (Summation)] 
Claim \ref{clm:rv_bound} states that for every $k$,
\[
r(v) \leq 2k+2t+\max_{u\in S(v,k,k+2t)}r(u), 
\]
then, 
\[
r(v)-2k-2t \leq \sum_{u\in S(v,k,k+2t)} r(u).
\]	
Let us sum the inequality above, for $k$ ranging from 1 to $r(v)/2-2t$. We assume without loss of generality that $t$ and $r(v)$ are positive. The sum of the left-hand terms is:
\[
\sum_{k=1}^{r(v)/2-2t}\left(r(v)-2k-2t\right)
=\frac{r(v)^2}{4}-t\cdot r(v)+\frac{r(v)}{2}-2t
\geq \frac{r(v)^2}{4}-3t\cdot r(v).
\]
The sum for the right-hand terms is:
\[ 
\sum_{k=1}^{r(v)/2-2t}\sum_{u\in S(v,k,k+2t)} r(u)\leq (2t+1)\times\sum_{u\in S(v,1,r(v)/2)}r(u).
\]
This is because the radius of a fixed node appears at most $2t+1$ times in the sum, because it is part of at most $2t+1$ crowns of the form $S(v,k,k+2t)$.
Then because of the bounded growth, the number of nodes in $S(v,1,r(v)/2)$ is bounded by $q\cdot r(v)/2$. Then the following holds:
\[ 
(2t+1)\times\sum_{u\in S(v,1,r(v)/2)}r(u) \leq (2t+1)\times \frac{q\cdot r(v)}{2}\times \frac{1}{|S(v,1,r(v)/2)|}\times \sum_{u\in S(v,1,r(v)/2)}r(u).
\]
Now using the lower bound of the left-hand term and the upper bound of the right hand term, and using $(2t+1)/2 \leq 2t$, we get: 
\[
\frac{r(v)^2}{4}-3t\cdot r(v) \leq 2t\cdot q\cdot r(v)\frac{1}{|S(v,1,r(v)/2)|}\sum_{u\in S(v,1,r(v)/2)}r(u).
\]
Dividing by $2t\cdot q\cdot r(v)$ on both sides, and defining $\alpha=\frac{1}{8t\cdot q}$ and $\beta=\frac{3}{2q}$, we get:
\[ 
\alpha.r(v)-\beta \leq \frac{1}{|S(v,1,r(v)/2)|}\sum_{u\in S(v,1,r(v)/2)}r(u) 
\]
\end{proof}

As the right-hand term in the inequality of the second claim is the average running time in the $r(v)/2$ neighbourhood of $v$, this concludes the proof of lemma \ref{lem:local_average}.

\subsection{Applications}

Thanks to the lemma, establishing a lower bound for node-averaged complexity of languages in $\LCLS$ for very sparse graphs boils down to show a simpler fact. 
It is sufficient to prove that there exists a set of nodes spread across the network with large enough running times. Then, as the nodes in the neighbourhood of these nodes have  similar running times in average, we get a large average for the whole network.  
We illustrate this type of proof with $\LCL$ problems on cycles. It is known that for such problems, the slowest node complexity can only take three forms: $O(1)$, $\Theta(\log^*\!n)$ or $\Theta(n)$. See for example \cite{BrandtHKLOPRSU17} for a recent presentation of this classification.\footnote{Even if not stated explicitly in \cite{BrandtHKLOPRSU17}, this classification also holds in the context where no knowledge of $n$ is assumed. This is because the $\Theta(\log^*\!n)$ bound relies on the construction of a maximal independent set, and that MIS is a problem for which the construction of \cite{KormanSV13} works.} We prove that the situation is exactly the same for ordinary nodes.

\begin{theorem}\label{thm:LCL_average}
For $\LCL$ on cycles, the node-averaged complexity has the same asymptotic classification as the slowest node complexity. 
\end{theorem}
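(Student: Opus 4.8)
The plan is to treat separately the three regimes of the known classification. For the upper bounds nothing is needed: the node-averaged complexity is at most the slowest-node complexity (averaging over nodes can only decrease a per-assignment maximum), so it is $O(1)$, $O(\log^*\!n)$ or $O(n)$ according as the slowest-node complexity is $\Theta(1)$, $\Theta(\log^*\!n)$ or $\Theta(n)$. It therefore remains to prove matching lower bounds in the last two regimes. Since a non-minimal algorithm is only slower, I may restrict attention to a minimal algorithm $A$; cycles have linearly bounded growth (with $q=3$) and every $\LCL$ lies in $\LCLS$, so Lemma~\ref{lem:local_average} is available.

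Suppose first the slowest-node complexity is $\Theta(n)$. Then for all large $n$ there are an ID assignment on $C_n$ and a node $v$ with $r(v)\ge cn$ for a constant $c>0$. Applying Lemma~\ref{lem:local_average} at $v$, the average running time over the ball $B(v,r(v)/2)$ is at least $\alpha r(v)-\beta=\Omega(n)$; in $C_n$ that ball contains $\Theta(\min(r(v),n))=\Theta(n)$ nodes, so the total running time over it, a fortiori over all of $C_n$, is $\Omega(n^2)$, and dividing by $n$ shows that the average running time for this assignment, hence the node-averaged complexity, is $\Omega(n)$. (If $r(v)\ge n$ the ball is already the whole cycle and the conclusion is immediate.)

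The substantial case is $\Theta(\log^*\!n)$. Here the language is not $O(1)$-solvable, so no single output label is globally legal; thus, writing $w$ for the (constant) checking-window width of the $\LCL$ on cycles, the length-$w$ constant word in any label forces rejection. I would run a tiled version of Linial's $\Omega(\log^*\!n)$ argument. Fix $t=c\log^*\!n$ for a small enough constant $c$, cut $C_n$ into $\Theta(n/\log^*\!n)$ consecutive arcs of length $\ell=4t$, and give the $i$-th arc identifiers drawn from a private block of the ID space, disjoint from all the others. Colour each $(2t+1)$-element subset of that block by the label $A$ outputs when its radius-$t$ view is the corresponding increasing tuple — a colouring with a bounded number of colours (one per output label, plus one for ``runs more than $t$ rounds''); by Ramsey's theorem a monochromatic subset of size $\ell$ exists provided the block has size a tower of height $O(t)=O(\log^*\!n)$, which for $c$ small is still polynomial in $n$, so all the blocks together occupy a polynomial range and the identifiers have $O(\log n)$ bits. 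Placed in increasing order along the arc, such a monochromatic set forces every deep-interior node (one whose radius-$t$ ball stays inside the arc; there are $\ell-2t=2t=\Theta(\log^*\!n)$ of them) that outputs within $t$ rounds to output one fixed label; consequently each of the $\lfloor(\ell-2t)/w\rfloor=\Omega(\log^*\!n)$ disjoint blocks of $w$ consecutive deep-interior nodes must contain a node with running time more than $t$ (otherwise the block is a forbidden constant word, or, in the ``runs more than $t$ rounds'' case, the whole deep interior is slow), giving $\Omega(\log^*\!n)$ nodes of running time $\Omega(\log^*\!n)$ in each arc. Summing over the $\Theta(n/\log^*\!n)$ arcs yields $\Omega(n)$ nodes of running time $\Omega(\log^*\!n)$ on this instance, so its average running time, hence the node-averaged complexity, is $\Omega(\log^*\!n)$. (One can instead force a single slow node in the middle of each arc and recover the bound by iterating Claim~\ref{clm:rv_bound} — the simulation step of Lemma~\ref{lem:local_average} — around each of them to produce $\Omega(\log^*\!n)$ further slow nodes in pairwise disjoint balls; this is the ``spread-out set of slow nodes'' route described in Section~\ref{sec:local_average}.)

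The main obstacle is this last case. One has to check that the Ramsey bookkeeping closes — a tower of height $O(\log^*\!n)$ stays polynomial exactly when $c$ lies below the threshold constant in Linial's bound — that the tiling creates no spurious coupling between adjacent arcs (ensured by counting only deep-interior nodes, whose radius-$t$ views never cross an arc boundary), and, since the theorem concerns arbitrary $\LCL$s and not merely $3$-colouring, that ``not $O(1)$-solvable'' genuinely forbids every long constant run; this last point rests on the finiteness of the output alphabet and the window form of $\LCL$ constraints on the ring. Everything else — the upper bounds and the $\Theta(n)$ case — is routine by comparison.
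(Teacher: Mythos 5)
Your upper bounds and your $\Theta(n)$ case coincide with the paper's proof: apply Lemma~\ref{lem:local_average} at one node with running time $\Omega(n)$ and divide the resulting $\Omega(n^2)$ total by $n$. The $\Theta(\log^*\!n)$ case is where you genuinely diverge. The paper uses the known slowest-node lower bound as a black box: it takes $\Theta(n/\log^*\!n)$ independent hard rings $R_1,R_2,\dots$ with pairwise disjoint identifier sets, extracts from each the $r_i$-neighbourhood of a node $v_i$ with $r_i\geq\gamma\log^*\!n$, concatenates these segments into one cycle (each $v_i$ keeps its running time there), and then invokes Lemma~\ref{lem:local_average} around each $v_i$ so that a constant fraction of every segment is slow. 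You instead re-prove a stronger form of the lower bound from scratch: a Naor--Stockmeyer-style Ramsey tiling that forces, on each arc, either a long constant output word (killed by local checkability plus non-$O(1)$-solvability) or $\Omega(\log^*\!n)$ slow nodes. This buys a self-contained argument that does not need Lemma~\ref{lem:local_average} at all in this regime and yields the stronger conclusion that a linear fraction of nodes is \emph{individually} slow; the price is redoing the Ramsey bookkeeping that the paper deliberately outsources to the classification. One small point: monochromaticity is all-or-nothing per arc, so your ``each $w$-block of deep-interior nodes contains a slow node'' step is a detour --- either every deep-interior node halts within $t$ rounds with the same label (contradiction with the forbidden constant run) or every one of them is slow.

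The one genuine gap is inputs. $\LCL$ in the sense of Naor and Stockmeyer, as the paper uses it, allows constant-size input labels, and the $O(1)/\Theta(\log^*\!n)/\Theta(n)$ classification you are matching is taken over worst-case inputs. Your Ramsey colouring indexes windows by their identifier pattern only; to run it you must first commit to an input assignment on the cycle, and the input pattern witnessing $\Omega(\log^*\!n)$ hardness need not be one you can lay down periodically (the problem restricted to, say, the all-zero input may be $O(1)$-solvable even when the general problem is not). Relatedly, ``not $O(1)$-solvable implies every constant output word of length $w$ is forbidden'' uses the homogeneity of verification windows, which fails once windows carry input labels. The paper's extract-and-glue sidesteps all of this because it transplants the hard neighbourhoods together with their identifiers \emph{and} their inputs. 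So your argument is sound for input-free $\LCL$s; for the theorem as stated you should either restrict to that case explicitly or fall back on the paper's transplantation to produce the spread-out set of slow nodes.
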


\begin{proof}
Remember that the slowest node complexity is an upper bound on the node-averaged complexity. Thereafter, it is sufficient to only prove the two lower bounds: $\Omega(\log^*\!n)$ and $\Omega(n)$.

Let us first focus on the case $\Theta(n)$. In this case, there exists a constant $\gamma$ (with $0<\gamma\leq 1$), such that on every cycle on $n$ nodes, for large enough $n$, at least one node $v$ has a running time at least $\gamma n$. 
As we consider a lower bound, we can assume that the algorithm is minimal.  
As cycles have linearly bounded growth, lemma \ref{lem:local_average} applies, the average complexity in the $(\gamma\cdot n/2)$-neighbourhood of $v$ is at least $\alpha \gamma\cdot n - \beta$, where $\alpha$ and $\beta$ are constants.
Thereafter, the sum of the running times, in the $(\gamma\cdot n/2)$-neighbourhood of $v$ is bounded from below by $\alpha\gamma^2n^2 - \beta \gamma n$. 
Hence, dividing by the number of nodes, the average complexity for the whole cycle is in $\Omega(n)$. 

Let us now consider the case of classic complexity $\Theta(\log^*\!n)$. Consider any minimal algorithm $A$ for the language $\mL$ we consider. 
Again, let $\gamma$ be a constant, such that the slowest node complexity is at least $\gamma\log^*\!n$, for large enough~$n$.
Let $R_1$ be a ring on $n$ nodes, such that a node $v_1$ has running time $r_1\geq \gamma \log^*\!n$. 
Then let $H_1$ be the graph that is composed of only the $r_1$-neighbourhood of $v_1$, and let $I_1$ be the set of identifiers of this segment. 
Now consider another ring $R_2$ on $n$ nodes, with no identifiers from $I_1$, such that there exists a node $v_2$ with running time $r_2\geq \gamma\log^*\!n$.
Let $H_2$ be the concatenation of $H_1$ with the $r_1$-neighbourhood of $v_1$. Note that because no identifier from $I_1$ is present in $R_2$, $H_2$ has distinct identifiers.
This operation can be repeated, until $H_k$ has more than $n/2$ nodes. Let $H$ be $H_k$, completed in an arbitrary way to get a full cycle of size $n$ with distinct identifiers. 

Note that as we performed the operation at most a linear number of times, the fact of removing some identifiers at each step is harmless as the identifier space is supposed to be polynomially large. 
Also note that the $\Theta(\log^*\!n)$ lower bound for the classic complexity is not affected by the constraints we add on the identifier space. This is because the lower bounds proofs do not rely on the particular shape of this space, which can even be assumed to be $\{1...n\}$\cite{Linial92}.

We claim that on this cycle $H$ with this ID assignment, the node-averaged complexity is $\delta\log^*\!n$ for some constant~$\delta$. 
Indeed the nodes $(v_i)_i$, for $i$ ranging from 1 to $k$, have the same neighbourhoods as in $(R_i)_i$ respectively, thus have running times $(r_i)_i$ respectively. 
Then using lemma \ref{lem:local_average}, for every $i$, the nodes at distance at most $r_i/2$ from $v_i$ have running time at least $\alpha r_i-\beta$. 
And by construction there is a constant fraction of the nodes of $H$ that are in this case.
As for every $i$, $r_i\geq \gamma\log^*\!n$, a constant fraction of the nodes have running time at least $\alpha \gamma\log^*\!n-\beta$, which gives an average lower bounded by $\delta\log^*(n)$, for some~$\delta$.
\end{proof}

This ``extract and glue'' technique works on other classes of graphs, and similar bounds can thus be achieved. 
Nevertheless it is not true that, for any $\LCL$ problem and any graph class, the classic complexity is the same as the node-averaged complexity, as the following proposition shows. 

\begin{proposition}\label{prop:special_class}
There exists a graph class $\mathcal{C}$ for which $3$-colouring has slowest node complexity $\Omega(\log^*\!n)$ but node-averaged complexity in $O(1)$.
\end{proposition}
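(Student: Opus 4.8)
The plan is to build $\mC$ so that a vanishing fraction of the vertices ``believe'' they sit deep inside a long cycle — which by Linial's argument forces an $\Omega(\log^*\!n)$ slowest-node bound — while the overwhelming majority of the vertices live in a rigid gadget that is $3$-colourable in $O(1)$ rounds. Concretely, for each large $n$ let $\mC$ contain one graph $G_n$ on $n$ vertices, consisting of a path $w_1w_2\cdots w_m$ with $m:=\lceil\log n\rceil$, a hub vertex $c$ joined to $w_1$, and pendant leaves $u_1,\dots,u_{n-m-1}$ all joined to $c$. Then $c$ has degree $n-m$, the $u_i$ and $w_m$ have degree $1$, each $w_i$ with $i<m$ has degree $2$, the graph is connected and plainly $3$-colourable. (Note that $G_n$ does \emph{not} have linearly bounded growth — the ball of radius $1$ around $c$ has $\approx n$ vertices — so Lemma~\ref{lem:local_average} does not apply, consistently with the point of this proposition.)

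For the lower bound, take any algorithm $A$ that $3$-colours every graph of $\mC$ and suppose for contradiction that its slowest-node complexity on $G_n$ is $t(n)=o(\log^*\!n)$. Since $m=\lceil\log n\rceil\gg\log^*\!n$, for large $n$ the path contains vertices $w_i$ with $2t+2\le i\le m-2t-2$; the radius-$t$ view of such a $w_i$ is merely a labelled path on $2t+1$ vertices with distinct identifiers from the polynomial identifier space, reaching neither $c$ nor $w_m$, so $w_i$ cannot distinguish $G_n$ from any other member of $\mC$ within radius $t$. Hence $A$, restricted to these vertices, is a $t$-round procedure that properly $3$-colours the interior of arbitrarily long paths with $O(\log n)$-bit identifiers, and inside a single $G_n$ the adversary is free to place those identifiers as it wishes. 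By the lower bound of Linial~\cite{Linial92} (which holds a fortiori when the nodes do not know $n$) this is impossible unless $t=\Omega(\log^*\!n)$, a contradiction; so the slowest-node complexity of every correct algorithm on $\mC$ is $\Omega(\log^*\!n)$.

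For the upper bound, design $A^\star$ as follows. After two rounds every vertex knows its own degree and those of its neighbours, which in $G_n$ determines its role: a degree-$1$ vertex whose neighbour has degree $\ge 3$ is a star leaf — no other configuration in $\mC$ looks like this — and outputs colour $2$; the unique vertex of degree $\ge 3$ all but one of whose neighbours have degree $1$ is the hub $c$ and outputs colour $1$; every other vertex lies on the path and runs a standard uniform colour-reduction, which needs no knowledge of $n$ (the identifiers provide the initial colouring) and terminates in $O(\log^*\!n)$ rounds, after which $w_1$ spends one extra round choosing a colour in $\{2,3\}$ different from $w_2$'s so as not to clash with $c$. This is correct on every graph of $\mC$, and on every identifier assignment the $n-m-1$ leaves and the hub stop in $O(1)$ rounds while the at most $m=\lceil\log n\rceil$ path vertices stop in $O(\log^*\!n)$ rounds, so $\sum_v r(v)=O(n)+O(\log n\cdot\log^*\!n)=O(n)$; hence the node-averaged complexity is $O(1)$, while an interior path vertex still realizes running time $\Theta(\log^*\!n)$, matching the lower bound.

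The delicate point — and the one I expect to need the most care — is calibrating the gadget size $m$. It must be $\omega(\log^*\!n)$, so that interior path vertices cannot see out to $c$ or $w_m$ within the $\Theta(\log^*\!n)$ radius used in the lower bound (a short path is $3$-colourable in $O(\text{length})$ rounds and the argument would collapse), yet it must be $o(n/\log^*\!n)$, so that running the $\log^*$-time procedure on the entire path contributes only $o(n)$ to $\sum_v r(v)$ and leaves the average $O(1)$; $m=\lceil\log n\rceil$ sits comfortably between these constraints. One should also check that the identifier space is polynomial in the total size $n$, so the colouring lower bound is $\Omega(\log^*\!(\mathrm{poly}(n)))=\Omega(\log^*\!n)$ — although here even $\Omega(\log^*\!m)=\Omega(\log^*\!\log n)=\Omega(\log^*\!n)$ would already suffice.
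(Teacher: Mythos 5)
Your proof is correct, and it shares the paper's high-level skeleton --- glue a gadget on which Linial's lower bound bites, of negligible size, onto a $\Theta(n)$-node gadget that is $3$-colourable in $O(1)$ rounds --- but both gadgets are realized differently. The paper's hard part is a \emph{cycle} of length $\Theta(\log^*\!n)$, so Linial's bound \cite{Linial92} applies essentially off the shelf (modulo a footnote about the attachment point); your hard part is a \emph{path} of length $\lceil\log n\rceil$, which forces you to invoke the path-interior form of the argument --- the observation that a vertex at distance more than $t$ from both ends sees only a sequence of $2t+1$ identifiers, which the adversary controls, so the neighbourhood-graph argument applies. You handle this correctly by restricting to indices $2t+2\le i\le m-2t-2$ and noting $m\gg\log^*\!n$. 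The paper's easy part is an even-odd caterpillar in which a $2$-colouring is hard-coded into the alternation of short and long legs, keeping the maximum degree at $4$; your easy part is a star, where two rounds of degree inspection identify every vertex's role, at the cost of an unbounded-degree hub (harmless in the $\LOCAL$ model, but it makes your class less ``sparse'' in spirit --- as you note, it even fails linearly bounded growth at radius $1$, whereas the paper's caterpillar fails it only at radius $\Theta(\log^*\!n)$ around the cycle). Your calibration of the gadget size is also slightly different but equally valid: the paper shrinks the hard part to $\Theta(\log^*\!n)$ nodes so its total contribution is $O((\log^*\!n)^2)$, while you need $\omega(\log^*\!n)$ path length for the interior argument and settle on $\Theta(\log n)$, contributing $O(\log n\cdot\log^*\!n)=o(n)$; either way the average is $O(1)$.
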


\begin{proof}
Consider the following construction, illustrated by figure~\ref{fig:caterpillar}. 
Start with a path of even length $k$, and index the nodes along the path from $v_1$ to $v_k$. 
Create three new nodes and link them to the node $v_k$. Now for the nodes $v_i$ with $1<i<k$, if the index $i$ is even, then add a node $v_i'$ and the edge $(v_i,v_i')$. We call this construction a \emph{short leg}. 
If the index $i$ is odd, add two nodes $v_i'$ and $v_i''$, and two edges $(v_i,v_i')$ and $(v_i',v_i'')$. This is a \emph{long leg}. 
For both types, the node $v_i$ is called the \emph{basis} of the leg. 
Let us call such a graph an even-odd caterpillar. 
The class $\mathcal{C}$ we consider is the set of graphs that can be built the following way: take an even-odd caterpillar based on a path of length $k$, and a cycle of length $\alpha\log^*\!k$ (where $\alpha$ is a large enough constant), and add an edge between an arbitrary node of the cycle and~$v_1$. 

Every algorithm 3-colouring this graph must in particular colour the $\alpha\log^*\!k$ cycle, and as the  size of the graph is linear in $k$, the identifiers space is polynomial in~$k$. 
Then Linial's lower bound applies on the cycle, and the slowest node complexity is $\Omega(\log^*\!k)$\footnote{A subtlety is that the cycle has one special node: the one on which the caterpillar is rooted. Linial's bound still holds because there exists nodes that are far enough from this special nodes, because $\alpha$ is chosen to be large enough.}. 

Let us now show that there exists an algorithm with constant node-averaged complexity for 3-colouring the graphs of this class. 
Every node first gathers its 3-hop neighbourhood. 
From this view it can deduce its position in the graph, and its behaviour for the following steps. 
More precisely, for every node~$v$: 
\begin{itemize}
\item if all the nodes that are adjacent to $v$ have degree two, then it is a node of the cycle, then it runs the Cole-Vishkin procedure for 3-colouring a cycle \cite{ColeV86}. It does not take into account the rest of the graph;
\item if it is the basis of a short leg, or the middle of a long leg, then it takes colour 1;
\item if it is the basis of a long leg, or has degree 1, then it takes colour~2;
\item if it has degree four, then it is $v_k$ and it takes colour $1$;
\item if it has degree two and both its neighbours have degree three, then it is~$v_1$, and it waits until both its neighbours have output, and it outputs a non-conflicting colour.
\end{itemize}

See figure \ref{fig:caterpillar}.
\begin{figure}[h!]
\begin{center}
\scalebox{1.2} 
{
\begin{pspicture}(0,-1.8)(10.62,1.8)
\definecolor{bleu1}{rgb}{0.35,0.35,0.7}
\definecolor{bleu2}{rgb}{0.5,0.5,0.85}
\definecolor{bleu3}{rgb}{0.7,0.7,1.0}
\definecolor{rouge1}{rgb}{1.0,0.7,0.7}
\definecolor{rouge2}{rgb}{1.0,0.4,0.4}
\definecolor{jaune}{rgb}{1.0, 0.90, 0.6}
\definecolor{jaune2}{rgb}{1.0, 0.6, 0.6}

\psline[linewidth=0.04cm](2.9,0.08)(10.1,0.08)
\pscircle[linewidth=0.04, dimen=outer, fillstyle=solid, fillcolor=rouge1, linecolor=rouge2](3.5,0.08){0.13}
\pscircle[linewidth=0.04,dimen=outer](1.22,0.02){1.7}
\pscircle[linewidth=0.04, dimen=outer, fillstyle=solid, fillcolor=jaune, linecolor=jaune2](2.89,0.08){0.13}
\pscircle[linewidth=0.04,dimen=outer,fillstyle=solid, fillcolor=jaune, linecolor=jaune2](-0.47,0.09){0.13}
\pscircle[linewidth=0.04, dimen=outer, fillstyle=solid, fillcolor=bleu3, linecolor=bleu1](2.75,0.69){0.13}
\pscircle[linewidth=0.04,dimen=outer,fillstyle=solid, fillcolor=rouge1, linecolor=rouge2](1.27,1.7){0.13}
\pscircle[linewidth=0.04,dimen=outer, fillstyle=solid, fillcolor=rouge1, linecolor=rouge2](2.43,1.17){0.13}
\pscircle[linewidth=0.04, dimen=outer, fillstyle=solid, fillcolor=jaune, linecolor=jaune2](1.95,1.51){0.13}
\pscircle[linewidth=0.04, dimen=outer, fillstyle=solid, fillcolor=jaune, linecolor=jaune2](0.01,1.19){0.13}
\pscircle[linewidth=0.04,dimen=outer, fillstyle=solid, fillcolor=jaune, linecolor=jaune2](-0.07,-1.07){0.13}
\pscircle[linewidth=0.04,dimen=outer, fillstyle=solid, fillcolor=bleu3, linecolor=bleu1](1.19,-1.67){0.13}
\pscircle[linewidth=0.04,dimen=outer, fillstyle=solid, fillcolor=bleu3, linecolor=bleu1](2.52,-1.07){0.13}
\pscircle[linewidth=0.04,dimen=outer, fillstyle=solid, fillcolor=bleu3, linecolor=bleu1](0.59,1.57){0.13}
\pscircle[linewidth=0.04,dimen=outer,fillstyle=solid, fillcolor=bleu3, linecolor=bleu1](-0.32,0.69){0.13}
\pscircle[linewidth=0.04,dimen=outer, fillstyle=solid, fillcolor=rouge1, linecolor=rouge2](-0.39,-0.53){0.13}
\pscircle[linewidth=0.04,dimen=outer,fillstyle=solid, fillcolor=rouge1, linecolor=rouge2](0.47,-1.49){0.13}
\pscircle[linewidth=0.04,dimen=outer,fillstyle=solid, fillcolor=rouge1, linecolor=rouge2](1.95,-1.49){0.13}
\pscircle[linewidth=0.04,dimen=outer,fillstyle=solid, fillcolor=rouge1, linecolor=rouge2](2.81,-0.55){0.13}

\psline[linewidth=0.04cm](4.12,0.12)(4.12,0.5)
\pscircle[linewidth=0.04,dimen=outer,fillstyle=solid, fillcolor=rouge1, linecolor=rouge2](4.13,0.63){0.13}
\pscircle[linewidth=0.04,dimen=outer,fillstyle=solid, fillcolor=bleu3, linecolor=bleu1](4.11,0.08){0.13}

\psline[linewidth=0.04cm](10.1,0.08)(10.34,0.48)
\psline[linewidth=0.04cm](10.1,0.08)(10.58,0.08)
\psline[linewidth=0.04cm](10.1,0.1)(10.3,-0.28)
\pscircle[linewidth=0.04,dimen=outer,fillstyle=solid, fillcolor=rouge1, linecolor=rouge2](10.39,0.57){0.13}
\pscircle[linewidth=0.04,dimen=outer, fillstyle=solid, fillcolor=rouge1, linecolor=rouge2](10.69,0.08){0.13}
\pscircle[linewidth=0.04,dimen=outer,fillstyle=solid, fillcolor=rouge1, linecolor=rouge2](10.39,-0.37){0.13}
\pscircle[linewidth=0.04,dimen=outer,fillstyle=solid, fillcolor=bleu3, linecolor=bleu1](10.1,0.08){0.13}

\psline[linewidth=0.04cm](4.72,0.12)(4.72,1.08)
\pscircle[linewidth=0.04,dimen=outer,fillstyle=solid, fillcolor=rouge1, linecolor=rouge2](4.73,1.19){0.13}
\pscircle[linewidth=0.04,dimen=outer,fillstyle=solid, fillcolor=bleu3, linecolor=bleu1](4.73,0.63){0.13}
\pscircle[linewidth=0.04,dimen=outer,fillstyle=solid, fillcolor=rouge1, linecolor=rouge2](4.73,0.08){0.13}

\psline[linewidth=0.04cm](5.34,0.1)(5.34,0.48)
\pscircle[linewidth=0.04,dimen=outer, fillstyle=solid, fillcolor=rouge1, linecolor=rouge2](5.35,0.61){0.13}
\pscircle[linewidth=0.04,dimen=outer, fillstyle=solid, fillcolor=bleu3, linecolor=bleu1](5.33,0.08){0.13}

\psline[linewidth=0.04cm](5.94,0.1)(5.94,1.06)
\pscircle[linewidth=0.04,dimen=outer, fillstyle=solid, fillcolor=rouge1, linecolor=rouge2](5.95,1.17){0.13}
\pscircle[linewidth=0.04,dimen=outer, fillstyle=solid, fillcolor=bleu3, linecolor=bleu1](5.95,0.61){0.13}
\pscircle[linewidth=0.04,dimen=outer, fillstyle=solid, fillcolor=rouge1, linecolor=rouge2](5.95,0.08){0.13}

\psline[linewidth=0.04cm](6.54,0.08)(6.54,0.46)
\pscircle[linewidth=0.04,dimen=outer, fillstyle=solid, fillcolor=rouge1, linecolor=rouge2](6.55,0.59){0.13}
\pscircle[linewidth=0.04,dimen=outer, fillstyle=solid, fillcolor=bleu3, linecolor=bleu1](6.53,0.08){0.13}

\psline[linewidth=0.04cm](7.14,0.08)(7.14,1.04)
\pscircle[linewidth=0.04,dimen=outer, fillstyle=solid, fillcolor=rouge1, linecolor=rouge2](7.15,1.15){0.13}
\pscircle[linewidth=0.04,dimen=outer, fillstyle=solid, fillcolor=bleu3, linecolor=bleu1](7.15,0.59){0.13}
\pscircle[linewidth=0.04,dimen=outer, fillstyle=solid, fillcolor=rouge1, linecolor=rouge2](7.15,0.08){0.13}

\psline[linewidth=0.04cm](7.74,0.06)(7.74,0.46)
\pscircle[linewidth=0.04,dimen=outer, fillstyle=solid, fillcolor=rouge1, linecolor=rouge2](7.75,0.57){0.13}
\pscircle[linewidth=0.04,dimen=outer, fillstyle=solid, fillcolor=bleu3, linecolor=bleu1](7.73,0.08){0.13}

\psline[linewidth=0.04cm](8.32,0.08)(8.32,1.04)
\pscircle[linewidth=0.04,dimen=outer, fillstyle=solid, fillcolor=rouge1, linecolor=rouge2](8.33,1.15){0.13}
\pscircle[linewidth=0.04,dimen=outer, fillstyle=solid, fillcolor=bleu3, linecolor=bleu1](8.33,0.59){0.13}
\pscircle[linewidth=0.04,dimen=outer, fillstyle=solid, fillcolor=rouge1, linecolor=rouge2](8.33,0.08){0.13}

\psline[linewidth=0.04cm](8.94,0.06)(8.94,0.46)
\pscircle[linewidth=0.04,dimen=outer, fillstyle=solid, fillcolor=rouge1, linecolor=rouge2](8.95,0.57){0.13}
\pscircle[linewidth=0.04,dimen=outer, fillstyle=solid, fillcolor=bleu3, linecolor=bleu1](8.93,0.08){0.13}

\psline[linewidth=0.04cm](9.54,0.06)(9.54,1.02)
\pscircle[linewidth=0.04,dimen=outer, fillstyle=solid, fillcolor=rouge1, linecolor=rouge2](9.55,1.13){0.13}
\pscircle[linewidth=0.04,dimen=outer, fillstyle=solid, fillcolor=bleu3, linecolor=bleu1](9.55,0.57){0.13}
\pscircle[linewidth=0.04,dimen=outer, fillstyle=solid, fillcolor=rouge1, linecolor=rouge2](9.55,0.08){0.13}
\end{pspicture} 
}
\end{center}
\caption{\label{fig:caterpillar}
The figure illustrates proof of proposition \ref{prop:special_class}. It takes $O(\log^*\!n)$ rounds to 3-colour the cycle on the left, but it take constant time to colour the even-odd caterpillar on the right, as a 2-colouring is hard-coded in the structure of the graph. In this picture, colour 1 is blue, colour 2 is red, and colour 3 is yellow.} 
\end{figure}
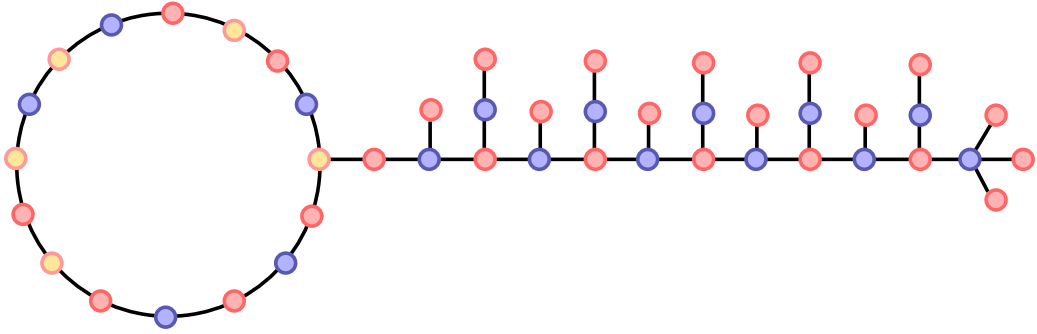
This algorithm uses at most $\log^*\!n$ rounds on the cycle and $v_1$, and constant time in the even-odd caterpillar. 
As the cyclic part has negligible size, the node-averaged complexity is constant.
\end{proof}

\section{Random ID assignments and randomized algorithms}

We move on to the second topic of this paper, where the randomized aspects are considered. 
The standard definition of the complexity in the $\LOCAL$ model not only considers the slowest node, but also the worst-case ID-assignment. 
In this section, we investigate the impact of replacing this measure by the running time (of the slowest node) on a random ID assignments. 
In other words, given a graph, we consider the average of the slowest-node running time over all possible ID assignments.

The main result is the equivalence between this measure, and the complexity of randomized algorithms for a classic class of problems. Here, the complexity of a randomized algorithm is the expectancy of the number of rounds before every node finishes, with an correct solution. 
Note that the two concepts have similar flavour, but are distinct. 
On the one hand, the random inputs of a randomized algorithm are independent, while in a random ID assignment, the identifiers are not independent. On the other hand, the IDs are distinct, while the random inputs can be equal.
On a high level, the equivalence is similar to Yao's principle \cite{Yao77}, that relates the performance of a randomized algorithm on a worst-case instance, and the complexity of a deterministic algorithm on a random instance. 
Also, note that in the literature, the usual complexity of randomized algorithms is not the one we consider, but the time needed to output a correct solution with high probability. That is, Monte Carlo algorithms  are considered instead of Las Vegas algorithms. We discuss briefly this point at the end of the section. 

\paragraph*{Random string length.} For the next theorem, randomized algorithms are given random strings of size $O(\log n)$, and not infinite such strings. 
This hypothesis is not excessive as most algorithm use a small amount of randomness. 
For example the celebrated MIS algorithm of \cite{Luby86} for bounded degree graphs, can be described as using random strings of size bounded by $O(\log n)$.  

\paragraph*{Completable $\LCLS$ languages.} The theorem deals with what we call \emph{completable $\LCLS$ languages}.\footnote{For this paragraph and the rest of this section we omit to mention possible inputs, as it does not have influence on the reasoning.} These are $\LCLS$ languages, with the following additional property. Consider a graph with some missing outputs, such that the verification algorithm accepts every neighbourhood that is fully labelled. Then if the language is completable there must exist a way to label the remaining nodes such that the resulting labelled graph is in the language is in the language. Note that the very classic problems of this area, such as $(\Delta+1)$-colouring or maximal independent set are completable. Also note that it is not the case of all the $\LCL$ problems, some of them such as sink-less orientation \cite{BrandtFHKLRSU16} are not completable.

\begin{theorem}\label{thm:random}
Given a completable problem in $\LCLS$, the expected slowest-node complexity of randomized algorithms, is asymptotically equal to the expected deterministic slowest-node complexity on identifier assignment taken uniformly at random.
\end{theorem}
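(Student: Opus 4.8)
The plan is to prove the two matching bounds $R \le O(D)$ and $D \le O(R)$, where $R$ denotes the best expected slowest-node time of a randomized algorithm and $D$ the best expected slowest-node time of a deterministic algorithm on a uniformly random identifier assignment. Both rest on one elementary fact: if the identifier range is $[n^{c}]$ (a power of two, say), then $n$ independent uniform draws from $[n^{c}]$, \emph{conditioned on being pairwise distinct}, are distributed exactly as a uniformly random ID assignment, and the probability that they fail to be distinct is at most $\binom{n}{2}/n^{c}=O(n^{2-c})$. Thus a random ID assignment and $n$ i.i.d.\ strings coincide outside a ``bad'' event whose probability is $n^{-\Omega(1)}$ once $c$ is large; the only real work is to remain correct on that event at a cost that the tiny failure probability will absorb, and this is exactly where completability of the $\LCLS$ language and the knowledge of $n$ assumed in this section come in.

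\textbf{From $D$ to $R$.} Let $A$ be a (near-)optimal deterministic algorithm on random IDs. The randomized algorithm has each node draw a uniform $c\log n$-bit string (with $c$ a large constant), treat it as an ID, and run $A$ with the following wrapper: $A$ is run for at most $n$ rounds; once a node $v$ holds the $A$-outputs of every node in a ball of constant radius (a small multiple of the verification radius $t$) around it, it runs the verification algorithm $\mV$ on that ball and \emph{commits} its $A$-output iff every such check accepts; if instead some check rejects, or some output in the ball is still missing at round $\approx n$, then $v$ enters \emph{rescue mode}, spends $O(n)$ rounds gathering the whole graph together with all committed outputs, and outputs a canonical legal completion. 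Shrinking the committed set by the constant collar built into this rule guarantees that the committed partial labelling is $\mV$-accepted on every fully-labelled ball, hence — by completability — extends to a legal labelling; since all rescue nodes end up with the same data they pick the same completion, so the algorithm is always correct. On the good event the drawn strings form a uniform ID assignment, $A$ is correct, every $\mV$-check accepts, nobody rescues, and the slowest node finishes within $\max_v r_A(v)+O(t)\le n$ rounds, whose expectation is $\le D+O(1)$; the bad event has probability $O(n^{2-c})$ and costs $O(n)$, contributing $O(n^{3-c})$. Taking $c\ge 4$ yields $R\le O(D)$.

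\textbf{From $R$ to $D$.} Here correctness is automatic: a randomized algorithm in the sense of this section outputs a correct solution for \emph{every} outcome of its coins (its coin space is finite, of size $n^{O(n)}$, since the strings have length $O(\log n)$, so there is no measure-zero exception), hence feeding it any strings of the right length produces a correct deterministic algorithm; moreover any such algorithm terminates at each node by round $\mathrm{ecc}(v)+O(1)\le n$, because a node's view (graph, identifiers or coins, inputs) stops growing after that many rounds. So let $A_{\mathrm{rand}}$ be (near-)optimal with $O(\log n)$-bit coins and expected slowest-node time $\le R$, and let the deterministic algorithm have node $v$ feed the low-order $O(\log n)$ bits of $I(v)$ to $A_{\mathrm{rand}}$ as its coin string (with the ID range a suitably larger power of two). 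Over a uniformly random ID assignment these strings are, by the coupling of the first paragraph, within total variation distance $O(n^{2-c})$ of $n$ i.i.d.\ uniform strings of that length; since the running time is always $\le n$, the expected slowest-node time is at most $\EE_{\mathrm{i.i.d.}}[\,\text{time of }A_{\mathrm{rand}}\,]+O(n^{2-c})\cdot n = R+O(n^{3-c})=O(R)$, so $D\le O(R)$.

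\textbf{Main obstacle.} The delicate point is the first direction. One cannot afford a genuine global check that the random ``identifiers'' are distinct, since such a check costs $\Omega(\mathrm{diam})$ on \emph{every} run and would destroy the $O(D)$ bound whenever $D\ll\mathrm{diam}$; nodes must therefore commit on purely local evidence, while a handful of stray collisions can render $A$'s labelling globally illegal. Completability is precisely what makes the ``locally verify, then commit or rescue'' scheme recover: after discarding a constant collar around every node whose local $\mV$-check fails, the committed part is $\mV$-consistent, hence extendable, and the few nodes that cannot commit fall back on gathering the graph — at a cost that, multiplied by the polynomially small failure probability, disappears inside the $O(\cdot)$. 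Keeping the collar radius, the round-$n$ cap, and the exponent $c$ mutually consistent is where the bookkeeping lies; the injective-versus-i.i.d.\ comparison itself is a one-line coupling.
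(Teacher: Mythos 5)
Your proof is correct and shares the paper's skeleton: couple $n$ i.i.d.\ draws from a range of size $n^{\Theta(1)}$ with a uniform distinct ID assignment (Lemma~\ref{lem:ncube} in the paper, with range $[n^4]$ and collision probability $1/n^2$), pay for the rare collision event by letting the affected nodes gather the whole graph and output a canonical completion whose existence is guaranteed by completability, and, for the converse direction, feed the identifiers to the randomized algorithm as coin strings. The one step where you genuinely diverge is the commit test in the deterministic-to-randomized direction. The paper's node runs $D$ on the pseudo-identifiers, then spends $2t$ extra rounds looking for a \emph{repeated pseudo-identifier} in its extended view and commits iff none appears; the delicate part of its proof is then showing that this purely syntactic test implies $\mV$-consistency of the committed partial labelling --- if the union $S$ of the views underlying a fully-labelled $2t$-ball contained a repeated ID, two of its nodes at distance at most $2t$ would both have detected it, and if $S$ is collision-free one can re-identify the graph outside $S$ and invoke the correctness of $D$ to conclude that $\mV$ accepts. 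You instead have each node collect the $A$-outputs on a constant-radius ball and run $\mV$ itself, committing only when every check accepts, so $\mV$-consistency of the committed set holds by construction and the only remaining observation is that on the collision-free event all checks pass. Your variant trades the paper's indirect argument for a slightly larger collar and the (harmless) need to evaluate $\mV$ on views that may contain colliding pseudo-identifiers; everything else --- the round cap of $n$, the choice of exponent, and the total-variation/conditioning step in the randomized-to-deterministic direction --- matches the paper at the same level of rigour.
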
 

The proof is based on the well-known fact that randomized algorithms do not need IDs, because they can generate them. 
More precisely, it is folklore that taking $n$ integers in a cubic range uniformly at random, avoids collisions with high probability. 
The probability $1-1/n$ used in the later result is slightly too weak to be used in the proof of theorem \ref{thm:random}, so we make it $1-1/n^2$ with the following lemma. 

\begin{lemma}\label{lem:ncube} If $n$ numbers are taken independently  uniformly at random between 1 and $n^4$, these numbers are pairwise distinct with probability $1-1/n^2$.
\end{lemma}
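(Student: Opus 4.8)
The plan is to bound the probability of the complementary event --- that at least two of the $n$ numbers coincide --- by a union bound over pairs, and then check that the resulting quantity is at most $1/n^2$.

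First I would fix the range $N = n^4$ and consider the $\binom{n}{2}$ unordered pairs of indices $\{i,j\}$. For a single fixed pair, the probability that the $i$-th and $j$-th numbers are equal is exactly $1/N = 1/n^4$, since the second one is uniform and independent of the first. By the union bound, the probability that some pair collides is at most $\binom{n}{2}\cdot\frac{1}{n^4} \le \frac{n^2}{2}\cdot\frac{1}{n^4} = \frac{1}{2n^2} \le \frac{1}{n^2}$. Hence the numbers are pairwise distinct with probability at least $1 - 1/n^2$, which is the claim.

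The only thing to be slightly careful about is the trivial edge case $n=1$ (no pairs, probability $1$, statement vacuously holds) and making sure the inequality $\binom{n}{2}\le n^2/2$ is stated cleanly; neither is a genuine obstacle. In short, there is no hard part here: the lemma is a one-line second-moment / union-bound estimate, included only because the ``with high probability'' convention of the paper ($1-1/n$) is a hair too weak for the application in Theorem~\ref{thm:random}, and raising the polynomial range from cubic to quartic buys the extra factor of $n$ in the failure probability. I would present it exactly as above, perhaps noting that any range that is $\omega(n^3)$ would already give $1-o(1/n)$ but that $n^4$ makes the constants transparent.
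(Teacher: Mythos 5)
Your proof is correct and follows exactly the same route as the paper's: a union bound over the $\binom{n}{2}$ pairs, each colliding with probability $1/n^4$, giving failure probability at most $1/(2n^2)\le 1/n^2$. Nothing to add.
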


\begin{proof}[Proof (Lemma \ref{lem:ncube})]
The probability of two fixed numbers being equal is $1/n^4$. Then by union bound, the probability that a pair of numbers have the same value is bounded by the number of such pairs $n(n-1)/2$ multiplied by the former probability. Then the probability of collision is bounded by $1/n^2$, thus with probability $1-1/n^2$ the numbers are pairwise distinct. 
\end{proof}

\begin{proof}[Proof (Theorem \ref{thm:random})]
As a preamble for the proof, remember that a randomized algorithm can be formalized as a deterministic algorithm having an auxiliary input, this input being a large enough random number. We consider an algorithm $A$ with an auxiliary input that can either be the ID or the random bits, and show that with high probability the behaviour is the same. As stated in lemma \ref{lem:ncube}, taking independently and uniformly at random $n$ numbers from $[n^4]$ provides a list of distinct numbers with probability $1-1/n^2$. Also when this sampling succeeds, that is when the numbers are distinct, the outcome is uniform among all distinct identifiers assignments, because the identifiers are taken independently uniformly at random.

Also note that in the context of completable $\LCLS$, given a graph containing some nodes with no output, it is always possible to compute a canonical completion of the current labeling. 
Indeed, one can for example choose an arbitrary order of the possible outputs for a node, order the outputs for the whole graph using a lexicographic order based on the identifiers, and then choose the smallest correct completion in this ordering to be the canonical completion.

Finally, let $t$ be the running time of the verification algorithm.

\paragraph*{From deterministic to randomized.} Let $D$ be a deterministic algorithm, and let $c$ be its expected slowest-node complexity on identifier assignments taken uniformly at random. 
Let $R$ be a randomized algorithm with the following behaviour on a node~$v$. 
It first picks a random number in $[n^4]$, and then runs $D$ with this pseudo-identifiers. 
If no collision is detected, then after some number of rounds the node knows what $D$ would output. 
Let us call this time~$r_D(v)$. The node then runs for $2t$ additional rounds, and if there is still no collision detected it outputs the same label as~$D$. This kind of output is called a \emph{regular output} in the rest of the proof. 
Otherwise, that is if a collision is detected, $v$ runs until it sees the whole graph and all the regular outputs. It then outputs the canonical output for that partially labelled graph.

For this algorithm to be well-defined we need to make sure that the partially labelled graph matches the definition of completable $\LCLS$ language. 
That is, the verifier must accept on every node that has a view without unlabelled vertices. Consider a node $v$ and its $2t$-neighbourhood $N$, that we suppose to be labelled. 
For every node in this $2t$-neighbourhood, we can consider the view that corresponds to its output, that is the view it had when $D$ stopped in the simulation. Let $S$ be the union of these views. 
If $S$ does not contain twice the same identifier, then we claim that the verifier accepts on $v$. This is because we can make sure to have no collision in the whole graph by changing all the identifiers outside of $S$. Then we would have a proper identified graph, on which $D$ produces correct outputs, and in particular it would produce the same output on the $2t$-neighbourhood of $v$, thus the verifier accepts on $v$. 
Suppose now that $S$ contains twice the same identifier. Let $a$ and $b$ be two nodes with the same ID in $S$. Then there exist two nodes $u$ and $w$ in $N$, such that $a$ is in the $r_D(u)$-view of $u$ and $b$ is in the $r_D(w)$-view of $w$. As $u$ and $w$ are at most $2t$ edges apart, then $\max(r_D(u)+r_D(w))+2t \geq \min(r_D(u)+r_D(w))$. This implies that when running for $2t$ additional rounds one of the two nodes would see the repeated ID. This is a contradiction, as such a node would then remain unlabelled for the first phase.

The algorithm is then well-defined and in addition it is correct, as it is based on a correct algorithm, with a completion that is consistent. 

Finally, the algorithm $R$ has probability at least $(1-1/n^2)$ to stop just $2t$ rounds after $D$ on every node, and probability at most $1/n^2$ to stop after at most $n$ rounds on some nodes. Then the expected runtime of the slowest node is upper bounded by $(1-1/n^2)(c+2t)+1/n^2.n$ which is asymptotically in $O(c)$.

\paragraph*{From randomized to deterministic.} Conversely, suppose that a randomized algorithm has expected complexity $c$. We claim that the same algorithm using the identifiers as random strings provides a deterministic algorithm with average complexity $c$. 
Let $c$ be the expected complexity when the random strings are all distinct, and $c'$ when they are non-distinct. 
The expected runtime is $(1-1/n^2)\cdot c+1/n^2\cdot c'$, which is asymptotically $c$, as $c'$ can be assumed to be at most $n$. Then the complexity is the same for the deterministic algorithm as for the randomized one.

Thus theorem \ref{thm:random} holds.
\end{proof}

A similar result can be obtained for the more classic context of Monte-Carlo algorithm. That is, when one considers the time before the nodes have stopped and output a proper solution with high probability, then the complexity of randomized algorithms and of deterministic algorithm on random identifiers are the same. 

A related topic is to minimize the amount of randomness used by randomized algorithms.
The amount of random bits necessary to perform a computation is usually not considered as a resource to be minimized in the $\LOCAL$ model, although it is in centralized computing, see \cite{PettieR08} for a precise example. Here, it is possible to do a small step in that direction, if we consider algorithms and languages that are local. In this case, it is not necessary to have all IDs of the graph that are different one from the other. In a local algorithm, the nodes see only a small neighbourhood of the graph, and thus only such neighbourhoods need to have distinct IDs. This is one of the ingredient of recent breakthroughs in the field, such as the speed-up theorem from \cite{ChangKP16}.\footnote{See theorem 6 in the paper.}

Let $s$ be the maximum number of nodes that a node can see when it runs the local algorithm at hand. Then the following holds:

\begin{proposition}
Taking uniformly at random numbers from $\left[n^2s^2\right]$ is sufficient to have locally distinct identifiers with high probability.
\end{proposition}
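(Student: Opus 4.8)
The plan is to prove this by a two-level union bound: first over pairs of nodes that share a common view, and then over all $n$ nodes of the graph. Fix a node $v$. When $v$ runs the local algorithm at hand, it sees a set $V_v$ of at most $s$ nodes. Since the identifiers are drawn independently and uniformly from $\left[n^2s^2\right]$, the probability that two fixed distinct nodes receive the same identifier is exactly $1/(n^2s^2)$. The number of pairs of nodes inside $V_v$ is at most $\binom{s}{2}\le s^2/2$, so a union bound over these pairs shows that the probability that $V_v$ contains a repeated identifier is at most $(s^2/2)\cdot 1/(n^2s^2)=1/(2n^2)$.

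Next I would take a union bound over the $n$ possible choices of $v$: the probability that the view of \emph{some} node contains a repeated identifier is at most $n\cdot 1/(2n^2)=1/(2n)\le 1/n$. Hence, with probability at least $1-1/n$ --- that is, with high probability in the sense fixed in Section~\ref{sec:model_def} --- every node's view is free of identifier collisions, which is precisely the statement that the identifiers are locally distinct. This is enough, since a local algorithm only ever inspects such views, so it behaves exactly as it would under a globally injective identifier assignment.

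There is no genuinely hard step here; the argument is the same counting estimate that underlies Lemma~\ref{lem:ncube}, just localised. The one point worth a sentence of care is that the view $V_v$, and in particular its cardinality, could a priori depend on the random identifiers themselves, because the number of rounds the algorithm runs on $v$ may depend on what $v$ sees. But because the algorithm is local, $s$ is by definition a uniform upper bound on $|V_v|$ valid for every identifier assignment, so the inequality $\binom{|V_v|}{2}\le s^2/2$ holds pointwise regardless of the draw, and the computation above goes through unchanged.
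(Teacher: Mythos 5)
Your argument is correct and is essentially the paper's own proof: bound the collision probability within a single ball of at most $s$ nodes by a union bound over pairs (giving $O(1/n^2)$), then union-bound over the $n$ ball centres to get probability at most $1/n$. The extra remark that $s$ is a uniform bound on the view size independent of the identifier draw is a reasonable point of care, but the approach and the computation match the paper's.
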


\begin{proof}
Consider a ball of size $s$. The probability that two nodes of this ball have the same identifier is upper bounded by $s^2/(n^2s^2)=1/n^2$. Then by union bound on all the centres of balls, one gets a probability of collision of $1/n$.
\end{proof}

\subsection{Node-averaged randomized complexity}\label{subsec:node_average_randomized}

After considering an average on the nodes, and on the identifiers assignment separately, we consider both averages together. That is we consider the behaviour of an ordinary node on an ordinary ID assignment. 
In the light of the previous subsection, this is equivalent to consider node-averaged complexity of randomized algorithms. This new measure can be unexpectedly low, as we illustrate on the example 3-colouring. 

Theorem \ref{thm:LCL_average} implies that the node-averaged complexity of 3-colouring of a cycle is $\Theta(\log^*\!n)$. It is also known that the randomized complexity is $\Theta(\log^*\!n)$, if one considers Monte-Carlo algorithms with probability of success greater than one half \cite{Naor91}. Then the expected running time is also in $\Theta(\log^*\!n)$. This contrasts with the following result.

\begin{proposition}
For 3-colouring on a ring, the expected complexity of an ordinary node is constant.
\end{proposition}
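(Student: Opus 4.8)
The plan is to use the equivalence discussed at the start of this subsection between the node-averaged complexity on random identifiers and the node-averaged \emph{expected} complexity of randomized algorithms (the node-averaged analogue of Theorem~\ref{thm:random}): it suffices to describe a randomized algorithm $R$ that $3$-colours the $n$-node ring and satisfies $\frac1n\sum_{v}\EE[r_R(v)]=O(1)$. Since $C_n$ is vertex-transitive, every node runs the same code, and the random bits are i.i.d., the quantity $\EE[r_R(v)]$ does not depend on $v$, so it is enough to bound $\EE[r_R(v)]$ for a single fixed node.

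The algorithm $R$ performs one round of random colouring followed by a purely local clean-up. In round~$1$ each node $v$ draws a uniform colour $c(v)\in\{1,2,3\}$ and learns the colours of its two neighbours; call $v$ \emph{good} if $c(v)$ differs from both of them, and \emph{bad} otherwise. A good node outputs $c(v)$ and halts at round~$1$. The bad nodes form maximal runs of consecutive bad nodes, the \emph{fragments}; a fragment is either a path whose two endpoints are adjacent to good nodes carrying fixed colours, or the whole ring. A bad node increases its radius until it has seen its whole fragment together with the fixed colours bounding it (in the degenerate case, the whole ring), computes the canonical proper $3$-colouring of that fragment consistent with the boundary colours (a canonical completion in the sense of the proof of Theorem~\ref{thm:random}), and outputs its colour in it. Correctness is immediate: a path with endpoint colours fixed in $\{1,2,3\}$ always admits a proper $3$-colouring of its interior --- colour greedily from one end, each internal vertex having at most one previously-coloured neighbour and hence two admissible colours, the last vertex at least one --- and $C_n$ is $3$-colourable; choosing a canonical completion makes all nodes of a fragment agree, and every edge is properly coloured because an endpoint that is good keeps its round-$1$ colour (which the clean-up respects) while two bad endpoints lie in a common fragment.

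It remains to bound $\EE[r_R(v)]$. If $v$ is good then $r_R(v)=O(1)$; if $v$ is bad then $r_R(v)\le \ell(v)+O(1)$, where $\ell(v)$ is the number of nodes of $v$'s fragment (seeing the whole fragment and the two bounding colours costs $\ell(v)+O(1)$ rounds; in the whole-ring case $\ell(v)=n\ge r_R(v)$). So $\EE[r_R(v)]\le O(1)+\EE[\ell(v)]$, and everything reduces to showing $\EE[\ell(v)]=O(1)$. Here $\{\ell(v)\ge m\}$ forces one of the $m$ windows of $m$ consecutive nodes through $v$ to consist entirely of bad nodes; inside a fixed such window one can select $\lceil m/3\rceil$ nodes that are pairwise at distance at least $3$, so that the events ``this node is bad'' depend on pairwise disjoint triples of colours and are therefore independent, each of probability $q:=1-(2/3)^2=5/9<1$. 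Hence $\PP[\ell(v)\ge m]\le m\,q^{\lceil m/3\rceil}$, so $\EE[\ell(v)]=\sum_{m\ge1}\PP[\ell(v)\ge m]\le\sum_{m\ge1}m\,q^{\lceil m/3\rceil}=O(1)$, giving $\EE[r_R(v)]=O(1)$ as required.

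I expect the one genuinely delicate step to be the tail estimate $\PP[\ell(v)\ge m]\le m\,q^{\lceil m/3\rceil}$: the ``bad'' indicators of nearby nodes are correlated, so the point is to extract from any long block of bad nodes a linear-size subfamily of \emph{independent} bad events by keeping the chosen nodes three apart, so that their colour triples are disjoint. Everything else --- the constant-time behaviour of good nodes, the $\ell(v)+O(1)$ bound for bad nodes, and the verification that the fragment clean-up yields a globally consistent proper $3$-colouring --- is routine.
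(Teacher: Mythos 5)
Your proof is correct, but it follows a genuinely different route from the paper's. The paper's algorithm is iterative: at every round each still-unlabelled node draws a colour among those not yet output by a neighbour and commits only if there is no conflict; the analysis shows that, whatever the current partial colouring, a fixed active node succeeds with probability at least $5/12$, so the expected number of active nodes decays geometrically with the round number and the expected sum of running times is $O(n)$. You instead spend all the randomness in a single round, let the ``good'' nodes (proper against both neighbours) halt immediately, and have each ``bad'' node wait until it sees its whole maximal bad fragment and the two bounding good colours, then apply a canonical greedy completion; the running time of a bad node is then controlled by the fragment length, whose tail you bound by extracting, from any window of $m$ consecutive bad nodes, $\lceil m/3\rceil$ nodes at pairwise distance at least $3$ whose badness events depend on disjoint colour triples and are hence independent, each of probability $5/9$. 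Both arguments are sound. The paper's version has an essentially trivial correctness proof (nodes only ever commit to non-conflicting colours) but needs the uniform per-round success bound over arbitrary partial colourings; yours shifts the work to a deterministic clean-up whose correctness requires the greedy-completion and fragment-consistency argument, in exchange for a very clean independence-based tail estimate and an explicit per-node bound $\EE[r_R(v)]=O(1)$. Two small points to keep in mind: detecting the degenerate ``whole ring is bad'' case requires either knowledge of $n$ or the ability to recognize having seen the entire cycle, which is consistent with the paper's standing assumption of knowledge of $n$ in the randomized sections; and in that wrap-around case the spaced selection gives $\lfloor m/3\rfloor$ rather than $\lceil m/3\rceil$ independent events, which does not affect convergence of $\sum_m m\,q^{\Theta(m)}$.
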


\begin{proof}
The algorithm we consider, consists in repeating a simple procedure. At each round every node that has not yet an output, take a colour at random among the colours that are still available. That is, it takes a colour that as not yet been output by a neighbour. Note that this is always possible, as the nodes have degree two, and choose among three colours. After the sampling, if there is no conflict, then the node outputs the colour. If there is a conflict, then the colour is forgotten, and the node continues to the next round. At a given round, if an uncoloured node outputs a colour, we say that it \emph{succeeds}, otherwise it \emph{fails}.

Given an arbitrary partial colouring obtained after some rounds, the probability that a fixed node succeeds is lower bounded by $\alpha=5/12$. 
This number is obtained by case analysis. 
It corresponds to the case where, the current node has both neighbours without outputs, but both nodes at distance two with outputs, and these outputs are different. Let $\beta=1-\alpha$. Also, let $V_k$ be the number of nodes that have not yet output after round $k$, with $V_0=n$. The following holds by linearity of the expectation.
 
\[
\EE(|V_k|\ | \ |V_{k-1}|)= \sum_{v\in V_{k-1}} \PP(v \text{ does $not$ stop at round $k$}) \leq \beta |V_{k-1}|
\]

We can apply the previous inequality repeatedly, and get: $\EE(V_k) \leq \beta^k n $. The number of nodes that stop at round $k$ is precisely $V_k-V_{k-1}$, then the sum of the running times is:
\[
\sum_k k(V_k-V_{k-1})\leq \sum_k kV_k.
\]
The expected sum of the running time is then upper bounded by $\sum_k k\beta^k n$. Then the node-averaged expected sum is $\sum_k k\beta^k $. As $\beta<1$, $\sum_k k\beta^k $ is a constant, thus the expected complexity of an ordinary node in a random ID assignment is constant.
\end{proof}

Note that having a constant complexity when looking at a more local measure, is not particular to this example. For example in \cite{Ghaffari16}, the author designs an algorithm for maximal independent set that terminates after $O(\log deg(v)+\log(1/\epsilon))$ rounds, with probability at least $1-\epsilon$, where $deg(v)$ is the degree of node $v$.  

\section{Conclusion and open questions}
This paper introduces the notions of node-averaged and ID-averaged complexities. We think these measures are meaningful when analysing distributed algorithms that do not have the knowledge of the size of the network, or in contexts where partial solutions are useful. Also, very local complexities, as the one of subsection \ref{subsec:node_average_randomized} and the one advocated in \cite{Ghaffari16}, are natural measures that one would like to understand better. Our results illustrate that these complexities can have interesting behaviours. An aspect that is not very satisfying is the assumption of linearly bounded growth in the local average lemma, it would be very interesting to know if this is necessary or not.

\section{Acknowledgements}
I would like to thank Juho Hirvonen, Tuomo Lempi\"ainen and Jukka Suomela for fruitful discussions, and Pierre Fraigniaud for both discussions, and help for the writing. I thank the reviewers for helpful comments, Patrice Ossona de Mendez for suggesting the name ``linearly bounded growth'', and Mohsen Ghaffari for pointing out that randomized  node-averaged complexity could be considered.  

This work is an extended and revised version of a preliminary conference
report \cite{Feuilloley17}. Part of the content is based on an earlier brief announcement \cite{Feuilloley15}.

The author received additional support from ANR project DESCARTES, and Inria project GANG.

\newpage

\DeclareUrlCommand{\Doi}{\urlstyle{same}}
\renewcommand{\doi}[1]{\href{http://dx.doi.org/#1}{\footnotesize\sf doi:\Doi{#1}}}
\bibliography{bibliography.bib}{}
\bibliographystyle{plainnat}

\end{document}